\newcommand{\edge}[2]{(#1,#2)}
\newcommand{\seppair}[2]{\langle#1,#2\rangle}
\begin{document}

\title{Embedding 1-Planar Graphs in Ten Pages
\thanks{Supported by the Deutsche Forschungsgemeinschaft (DFG), grant
    Br835/20-1.}}
\author{Franz J. Brandenburg}
\institute{94030 Passau, Germany \\
           \email{brandenb@informatik.uni-passau.de}
}

\maketitle

\begin{abstract}
Every planar graph has a 4-page book
embedding  and  this bound is tight.
We show   that every 1-planar graph, which is a graph that
admits a drawing with at most one
crossing per edge,  has a 10-page book embedding. In addition,
four pages are sometimes necessary and always sufficient if the planar skeleton, obtained from a 1-planar drawing by removing all crossed edges, has a Hamiltonian cycle.
\end{abstract}
\vspace{2mm}

 \noindent \textbf{Keywords}  topological graphs, book embedding,
planar graphs , 1-planar graphs

\section{Introduction} \label{sec:intro}
A  $k$-page \emph{book embedding}  of a graph consists of a
\emph{linear ordering} or \emph{layout} of the vertices, which is defined by placing
them from left to right, and an \emph{embedding}  of the edges in
$k$ pages, such that there is no  conflict  in any
page. For two vertices $u$ and $v$, let $u<v$ if $u$ precedes $v$ in
the linear ordering and let $u\leq v$ if $u<v$ or $u=v$.
Edges $\edge{u}{v}$ and $\edge{x}{y}$  with $u<v$ and $x<y$
 \emph{twist}  if $u < x < v < y$  or $x<u<y<v$.
 They \emph{nest} if $u \leq x < y \leq v$ and are
\emph{disjoint} if $u < v \leq x < y$. There is a \emph{conflict}
 if two   edges twist that  are embedded in the same page. This shall be excluded
whenever an edge is embedded in a page.

The \emph{book thickness} of a graph $G$ is the minimum number of
pages in all  book embeddings of $G$.  Book thickness is also known as
  stacknumber or pagenumber \cite{dw-llg-04,hls-cqsmlg-92}.
It has been  been shown that every nondiscrete  outerplanar graph has
 book thickness exactly one. A graph has book thickness at most two
  if and only if it is a subgraph of a planar graph with a Hamiltonian
cycle \cite{bk-btg-79}.  Every  planar graph  has book thickness at
most four. The upper bound has been shown by Yannakakis
\cite{y-epg4p-89} using a linear time algorithm that
constructs a 4-page book embedding. On the other hand,
  Bekos et al.~\cite{kbkpru-4pages-20} and  Yannakakis \cite{y-4pages-20}
  have shown that some planar graphs need four pages, so that the bound is tight.
The book thickness
of $n$-vertex graphs with $m$ edges is at most $\sqrt{m}$
\cite{m-edgepage-94} and at least $\lceil \frac{m-n}{n-3} \rceil$
\cite{bk-btg-79}. In particular, the complete graph $K_n$ has book
thickness $\lceil n/2 \rceil$ \cite{bk-btg-79}.
 Moreover, every graph has a subdivision that can be
embedded in three pages \cite{bk-btg-79}.

Alternatively, there are queue layouts which admit
twists and disjoint edges but exclude nesting
\cite{dw-llg-04, hls-cqsmlg-92, hr-loguq-92}. It is known that graphs
with   1-queue layouts are planar \cite{hr-loguq-92} and that the planar graphs
have bounded queue number \cite{djmmuw-queue-20}.
The planar graphs have be characterized by
an embedding in a
splittable deque  \cite{abbbg-deque-18}, which
is an advanced data structure consisting of a doubly connected list
(or deque \cite{k-acp-68}) on top of a depth-first search tree.

There are several approaches to extend the planar graphs, for
example by drawings on surfaces of higher genus \cite{gt-tgt-87},
forbidden minors \cite{d-gt-00},   drawings in the plane with
restrictions on crossings \cite{dlm-survey-beyond-19}, or
generalized adjacency relations \cite{cgp-mg-02}.
   Graphs with bounded genus have constant book thickness \cite{m-genuspage-94}.
Also minor-closed graphs, e.g., graphs with constant tree-width,
have constant book thickness \cite{dw-gtgtp-07}.
 A graph is $(g, k)$-\emph{planar}  if it can be drawn on a surface of Euler genus at
most $g$ with at most $k$ crossings per edge
  \cite{df-stackqueue-18}. Clearly,  planar graphs are $(0, 0)$-planar
  and $(0, k$)-planar graphs are known as
$k$-\emph{planar graphs} \cite{pt-gdfce-97}. An $n$-vertex
$(g,k)$-planar graph with fixed $g$ and $k$ has book thickness
$O(\log n)$  % as shown by   Dujmovi\'{c} and Frati
\cite{df-stackqueue-18}.
 For $k$-planar graphs, this improves the $O(\sqrt{n})$ bound from
\cite{m-edgepage-94}.
 In a  manuscript,   Alam et al.~\cite{abk-bt1pg-15} have proposed
an algorithm that embeds   3-connected   1-planar graphs  in twelve
and general ones in sixteen pages, but there are some gaps.  Bekos~et
al.~\cite{bbkr-book1p-17} have shown that the book thickness of 1-planar
graphs is constant and that 39 pages suffice.
 From a recent approach
by Bekos et al.~\cite{bdggmr-framed-24}
one can obtain that the book thickness of 1-planar graphs is at most 31, which
has been improved to eleven in  \cite{b-bookmap-20}.  We save one more page by an
approach  designed for 1-planar graphs.

For an upper bound on the  book thickness, we   augment a given 1-planar
graph by further edges and
 allow multi-edges. So we obtain
  \emph{normalized 1-planar multigraphs} which have a 1-planar drawing
with triangles and X-quadrangles. An X-\emph{quadrangle}
is a quadrangle  with uncrossed edges
in the  boundary and a pair of crossed edges in the interior.
A  planar multigraph with triangular and quadrangular faces remains
if all crossed edges are removed and there is a
triangulated planar multigraph if only one edge from
each pair of crossed edges is removed.
Such graphs are 2-connected but not necessarily 3-connected.
Our algorithm uses the  peeling technique  that
iteratively decomposes a graph into 2-level graphs.
 Levels are defined
inductively on a planar multigraph with the vertices in the outer face at level zero.
 A planar 2-\emph{level multigraph} consists of an outer cycle
at level $\ell$ and the subgraph in its interior excluding the
subgraphs inside cycles at level $\ell+1$.  The subgraph in the interior
is composed of \emph{blocks} which are the cycles of 2-connected components.
Blocks are organized in a \emph{block-tree}  which is a connected
component of blocks.

 Yannakakis~\cite{y-epg4p-89} has proposed two methods
 for the computation of the vertex ordering. In the \emph{consecutive method},
each block is traversed individually, whereas
all blocks with the same dominator are traversed in one sweep
in the \emph{nested method}. For any
method, he has shown that every planar 2-level  graph can be embedded in
three pages  if the graph is internally triangulated and  there is a single block-tree.
These restrictions can be relaxed.
We introduce \emph{block-expansions} as a new method. It   extends
the nested method and defines  super-blocks that take on the role of blocks
from other approaches.
Thereby, we obtain a  new vertex ordering over other works  \cite{bbkr-book1p-17, bdggmr-framed-24,y-epg4p-89}.
In general,  a 3-page book embedding for planar 2-level graphs is used.
Instead,
we aim at a 4-page book embedding
for 1-planar 2-level  graphs that includes
almost all edges. Only the purple ones are left,
that are defined below and need two extra pages.
So we obtain  a 6-page embedding for any  2-level subgraph of a
normalized 1-planar multigraph.

The peeling technique on planar multigraphs allows to reuse pages at
 all odd (even) levels, so that the total number of pages is at
most twice the number of the 2-level case.
There is an improvement  at the composition of the
embedding  of 2-level graphs.
Yannakakis 5-page algorithm  saves one page for the uncrossed edges.
Similarly, one page can be saved for the purple ones,
such that ten pages suffice.
  In comparison, Bekos et al.~\cite{bbkr-book1p-17} have
used 16 pages for 1-planar 2-level graphs, 34 pages for the
composition of book embeddings from 2-level graphs and 5 additional
pages to deal with   separation pairs and
edges that cross in the outer face.

Alam et al.~\cite{abk-bt1pg-15} have shown that four pages suffice if the
 1-planar graph contains a Hamiltonian cycle when all crossed edges are removed.
This holds, e.g., for extended wheel graphs, which are  $n$-vertex   1-planar
graphs with the maximum of $4n-8$ edges
\cite{b-ro1plt-18,s-s1pg-86,s-rm1pg-10}. On the other hand,
such 1-planar graphs  need  at least four pages. Hence,
extended wheel graphs have book thickness four, the smallest of which
has eight vertices. \\

\noindent \textbf{Our contribution.}
We first
 extend  previous approaches towards a simplification of 1-planar graphs
and augment a 1-planar drawing
by uncrossed multi-edges, such that there is a normalized 1-planar multigraph.
%Such graphs can be treated more easily.
Next, we introduce  block-expansions  as a new method for the computation of the
vertex ordering and use a novel page assignment for the edges that embeds almost
all edges of a 1-planar 2-level graph in four pages. The remaining edges
can be embedded in two more pages. Two pages can be saved
at the composition, so that we obtain a 10-page book embedding
for any   1-planar graph.
Finally, we provide simple examples
of 1-planar graphs with book thickness exactly four.\\

After   introducing   basic notions in  the next Section, we    describe an algorithm for a
6-page book embedding of 2-level  graphs
 in Section~\ref{sec:2-level}, and then   combine it to a 10-page book
embedding of   1-planar graphs in Section~\ref{sect:main}. We
conclude in Section~\ref{sec:conclusion} with some open problems.

\section{Preliminaries}  \label{sec:prelim}
We consider  undirected \emph{graphs} and \emph{multigraphs} $G = (V,E)$ with sets of
% $n \geq 1$
 vertices $V$ and edges $E$ without self-loops. Some edges may have copies in a planar or 1-planar drawing,
such that there are parallel edges. % Clearly, these are ignored for a book embedding.
%A copy of an edge is   disregarded for a book  embedding.
  A set of edges with a  common vertex
$v$ is called a \emph{fan} at $v$. For a subset of vertices $V' \subseteq V$
let $G[V']$ denote the subgraph induced by $V'$. Similarly, let
 $G[E'] = (V,E')$ for  $E' \subseteq E$.
 The set of vertices of an induced subgraph $G'$
is denoted by $V(G')$.

We assume that a    graph is simple and is defined by a drawing
 (or a topological embedding) in the plane.
A drawing is \emph{planar} if edges do not cross and \emph{1-planar}
if each  edge is crossed at most once.  A planar drawing  partitions
the plane into \emph{faces}.
 The unbounded face is called the \emph{outer face}.
 For convenience,
we will use (mixed) sets of vertices and edges
to specify a face, e.g., a vertex $a$ and an edge $\edge{b}{c}$ for
a triangle $(a,b,c)$, and say that a vertex (edge) is \emph{in the face}
if it is in the boundary.
Ringel~\cite{ringel-65} has first observed that
uncrossed edges $\edge{a}{b}, \edge{b}{c}, \edge{c}{d}$ and $\edge{d}{a}$
can be added if  edges $\edge{a}{c}$  $\edge{b}{d}$ cross in a 1-planar drawing,   see Figure~\ref{fig:framedWconf}. Added edges may be a copy,
such that the obtained graph is augmented to a multigraph.
Thereafter, edges cross only in the interior of a quadrangle,
such that there is an \emph{X-quadrangle}, also known as an
X-configuration~\cite{abk-sld3c-13} or  a kite~\cite{bdeklm-IC-16}, see
Figures~\ref{fig:kite}-\ref{fig:framedWconf}.
Hence, we can assume that there is no crossing in the outer face
of a 1-planar drawing, since we can use a multi-edge.
 As usual, a  graph is \emph{planar} (1-\emph{planar})
if it admits a  planar (1-planar) drawing.
It is well-known that every
   $n$-vertex 1-planar graph has
at most $4n-8$ edges, and that the bound is tight for $n=8$
and all $n \geq 10$  \cite{bsw-1og-84}.
For convenience, we use terms like vertex, edge, face or crossing
both for a planar (1-planar) graph and its planar (1-planar) drawing.

\begin{figure}[t]
  \centering
\subfigure[]{
    \includegraphics[scale=0.3]{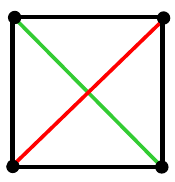}    % Figures7pages1
    \label{fig:kite}
    }
  \subfigure[]{
    \includegraphics[scale=0.32]{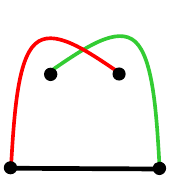}   %Fig-7pages1-planar  % Figures7pagesFinal
    \label{fig:Bconf}
    }
 \subfigure[]{
    \includegraphics[scale=0.32]{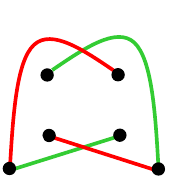}    % Figures7pages1
    \label{fig:Wconf}
    }
 \subfigure[]{
    \includegraphics[scale=0.32]{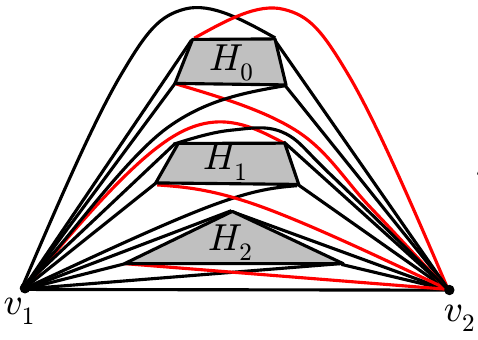}    % GD21
    \label{fig:3components}
    }
 \subfigure[]{
    \includegraphics[scale=0.4]{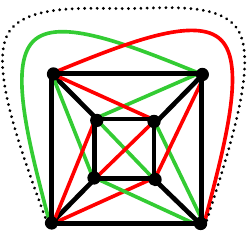}  % Figures7pages1
    \label{fig:crossed-cube}
    }
 \subfigure[]{
    \includegraphics[scale=0.28]{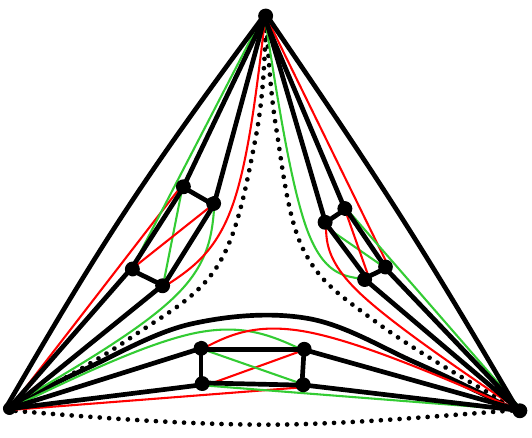}  %Fig- framed    % Figures7pages1
    \label{fig:framedWconf}
    }
\caption{(a) An X-quadrangle, (b) a B-configuration, (c) a W-configuration, (d)
a separation pair with B- and W-configurations,
(e) the 1-planar crossed cube with a multi-edge drawn dotted, and (f) a normalized 1-planar multigraph.
  }
  \label{fig:1planar}
\end{figure}

A 1-planar   drawing may contain B- and W-configurations \cite{t-rdg-88}, as shown
in Figures~\ref{fig:Bconf} and \ref{fig:Wconf},
that have a crossing  in the outer face of a 1-planar drawing of the graph or
a component at a separation pair \cite{abk-sld3c-13}.
We wrap these configurations by a multi-edge.
A 1-planar   drawing is \emph{normalized} if it consists of triangles
and X-quadrangles, possibly  with multi-edges, such  that
every copy of a multi-edge is uncrossed.
The edge itself is taken as 0-th copy,
and   there are
vertices on either side of a cycle with   two copies,
except for the outer face.
A planar drawing is normalized if it is obtained from a
normalized 1-planar drawing by removing one or both edges from each pair
of crossed edges.
A 1-planar (planar) multigraph is \emph{normalized} if it admits a respective
drawing. A normalized 1-planar (planar) multigraph is 2-connected but not
necessarily 3-connected, see  Figures~\ref{fig:3components} and \ref{fig:framedWconf}.
A  \emph{separation pair}  $\seppair{u}{v}$
consists of two vertices $u$ and $v$, such that  $G-\{u,v\}$
partitions into at least two connected components,
one of which is the \emph{outer component}, that is, it contains vertices
in the outer face of the given  drawing of $G$. The other components
are called \emph{inner components}.
It  is an \emph{inner} (\emph{outer}) separation pair  if $u$ is not
  in the outer face (if $u$ and $v$ are in the outer face). \\

 Similar to \cite{abk-sld3c-13, bbkr-book1p-17, ehklss-tm1pg-13b},
we augment a  1-planar drawing  by uncrossed multi-edges
towards a triangulation.

\begin{lemma} \label{lem:normalform}  %Lem2.1
Every  1-planar graph $G=(V,E) $ can be augmented to
 a normalized 1-planar multigraph $G'=(V',E')$
with $V=V'$ and  $E \subseteq E'$, such that  $G'$ has an uncrossed  multi-edge
\edge{u}{v}
  if   there is a separation pair $\seppair{u}{v}$.
%Now any inner component
%is enclosed by two copies of the multi-edge.
Conversely, there is a separation pair or the outer face (in a 1-planar drawing) if there is an uncrossed  multi-edge.
Graph $G'$ has at most $4n-7$ edges, including copies, if it has $n$ vertices.
\end{lemma}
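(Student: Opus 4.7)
The plan is to build $G'$ from $G$ in three augmentation stages, each carried out in the fixed 1-planar drawing so that 1-planarity is preserved, and then verify the converse and the edge count separately.

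First I would normalize the crossings. Working in the given 1-planar drawing (after rotating the sphere so no crossing sits in the outer face), I would visit every crossing of edges $\edge{a}{c}$ and $\edge{b}{d}$ and invoke Ringel's observation cited in the paper: the four ``frame'' edges $\edge{a}{b},\edge{b}{c},\edge{c}{d},\edge{d}{a}$ can be drawn uncrossed close to the crossing point. Whenever a frame edge already exists in $G$ but is routed elsewhere, I add a second copy as a multi-edge, placed along the quadrangle boundary. After this pass, every crossing is wrapped in an X-quadrangle and no B- or W-configuration remains. Next, I would add uncrossed multi-edges for separation pairs: for every separation pair $\seppair{u}{v}$, pick an innermost inner component $C$; its boundary in the drawing passes through $u$ and $v$, and I can draw an uncrossed edge $\edge{u}{v}$ along one side of $C$ (a second copy, if $\edge{u}{v}$ already exists elsewhere). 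Finally, I triangulate every remaining uncrossed face by adding uncrossed chords or multi-edges between boundary vertices; any face that cannot be triangulated simply (because the needed chord already exists in another face) is closed off with a multi-edge copy. The resulting drawing consists only of triangles and X-quadrangles, so $G'$ is a normalized 1-planar multigraph with $V=V'$ and $E\subseteq E'$.

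For the converse, suppose $G'$ contains an uncrossed multi-edge $\edge{u}{v}$. Its two copies together bound a closed region $R$ in the drawing. If $R$ contains no vertex of $V\setminus\{u,v\}$ in its interior, then (because $G'$ is normalized and faces are triangles or X-quadrangles, neither of which is bounded by two parallel edges) $R$ must be the outer face. Otherwise, every path from a vertex in the interior of $R$ to a vertex in its exterior must cross the curve formed by the two copies, and since edges of $G'$ do not cross uncrossed edges, every such path passes through $u$ or $v$; hence $\seppair{u}{v}$ is a separation pair.

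For the edge bound I would planarize $G'$ by replacing each of the $q$ crossings with a dummy vertex. This yields a simple planar triangulation on $n+q$ vertices with $m+2q$ edges, where $m$ counts edges of $G'$ with multiplicity; Euler's formula gives $m+2q=3(n+q)-6$, i.e.\ $m=3n-6+q$. Since each pair of crossing edges is contained in its own X-quadrangle and each crossing contributes at most one unit to $q$ while ``using up'' the four frame edges, the standard bound $q\le n-2$ for normalized 1-planar drawings applies; if the outer face is an X-quadrangle we pick up one extra multi-edge on its boundary, and the worst case yields $m\le 4n-7$.

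The main obstacle is bookkeeping in the first two stages: I must be sure that adding frame edges, separation-pair edges, and triangulating chords never forces a second crossing on an already-crossed edge nor conflicts among themselves. The safeguard is to always route new copies immediately adjacent to an existing boundary (the X-quadrangle just created or the inner-component boundary), so each new uncrossed edge lies in a single face of the current drawing; the triangulation step then only subdivides uncrossed faces and cannot affect crossed ones. The converse and the counting argument are then clean consequences of the structural normal form.
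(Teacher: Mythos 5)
Your construction of $G'$ (wrap every crossing into an X-quadrangle \`a la Ringel, add uncrossed copies of $\edge{u}{v}$ for separation pairs, then triangulate the remaining uncrossed faces) is essentially the paper's construction, and your converse argument via the closed curve formed by the two copies of a multi-edge is the same as the paper's. One small omission on the construction side: the paper first reroutes any crossed edge whose two endpoints lie on a common face (removing that crossing), and this preprocessing is what lets it argue that every face with $k\geq 4$ vertices admits a chord that is uncrossed \emph{and not a copy}; you instead permit copies as triangulating chords, which is a legitimate alternative but then you should check that every such 2-cycle has vertices on both sides, as the definition of a normalized drawing requires.

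Where you genuinely diverge is the edge count, and this is where the one real gap sits. The paper gets $4n-7$ by induction on separation pairs, splitting at $\seppair{u}{v}$ and using the $4n-8$ bound for simple 1-planar graphs on each part. You instead planarize and apply Euler's formula, getting $m=3n-6+q$ (triangular outer face) or $m=3n-5+q$ (outer face a 2-cycle of a multi-edge --- note it cannot be an X-quadrangle, since a normalized drawing has no crossing in the outer face), and then invoke ``the standard bound $q\leq n-2$.'' That bound is true, but your justification (``each crossing uses up four frame edges'') does not establish it; the clean argument is a face count on the planar skeleton $G'[E_b]$: with $t$ triangular and $q$ quadrangular faces one gets $t+2q=2n-4$, hence $q\leq n-2$, and your two cases then give $m\leq 4n-8$ and $m\leq 4n-7$ respectively. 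With that lemma supplied, your counting route is arguably cleaner than the paper's induction (which has to be careful that only one extra copy is charged per separation pair), and it makes transparent why equality forces a quadrangulated skeleton plus an outer bigon, exactly as in the crossed cube.
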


\begin{proof}
If  a face of a 1-planar drawing of the simple graph $G$
contains both vertices of a crossed edge, then reroute the edge in the interior of the face.
Thereby one crossing is removed.
%The face is searched in the planarization, which is obtained by placing a dummy vertex
%on each crossing \cite{klm-bib-17}.
Thereafter,
if edges $\edge{a}{c}$ and $\edge{b}{d}$ cross, then  the drawing is augmented by
uncrossed  edges such that there is an X-quadrangle, as
 suggested by Ringel \cite{ringel-65}. An added  edge  may be a copy,
that is omitted  if there is no
vertex in the interior of the 2-cycle consisting of any two copies.
Finally, the obtained drawing is triangulated,  such that there are no further
crossings or multi-edges. Therefore, observe that
 any face $f$ with vertices $v_1,\ldots, v_k$
for   $k \geq 4$ has a vertex $v_i$ such that $f$ can be partitioned
 by a  chord $\edge{v_i}{v_j}$ for   $j \not\in \{i-1, i, i+1\}$,
such that uncrossed and  not a copy. Clearly, $\edge{v_i}{v_j}$ is uncrossed if
it is drawn in the face. It is not a copy  for $k=4$,
since, otherwise, edges $\edge{v_1}{v_3}$ and $\edge{v_2}{v_4}$ must cross outside $f$,
which is excluded by the first step, and, by induction, for any $k \geq 5$.
Obviously, the so obtained graph $G'$
is a normalized 1-planar multigraph.

If $\seppair{u}{v}$ is a separation pair of $G'$, then
there is a face in the drawing of $G-\{u,v\}$ between any two components.
Hence, an uncrossed   copy of $\edge{u}{v}$ can be drawn in the face
between two components that are consecutive at $u$.
Conversely, the curve formed by two uncrossed copies of an edge partitions
(the drawing of) $G$ into an inner and an outer part if each part is non-empty.
Otherwise, one copy is omitted or the graph is completely in the interior.

At last, consider the number of edges of $G'$ including copies.
Every simple  $n$-vertex 1-planar graph (without multi-edges) has at most $4n-8$ edges
\cite{ringel-65}. Hence, $G'$ has at most $4n-8$ edges if it has no separation pairs.
By induction on the number of separation pairs, if $G-\{u,v\}$ decomposes
into parts $G_1$ and $G_2$ with $k$ and $n-2-k$ vertices, respectively,
then $G_1+\{u,v\}$ has at most $4(k+2)-8$
edges and $G_2+\{u,v\}$ has at most $4(n-k)-8$ edges including copies
if edge $\edge{u}{v}$ is
discarded for both parts. Now the number of edges of $G'$ is
at most $4(k+2)-8 + 4(n-k)-8+1 = 4n-7$, since  a copy of
edge $\edge{u}{v}$ is added in the end.
\end{proof}

Note that  the bound of $4n-7$ is tight, as shown by the crossed
cube in Figure~\ref{fig:crossed-cube}. In fact, any simple $n$-vertex 1-planar
graph with $4n-8$ edges has two vertices in the outer face that can be connected
by an edge and its first copy.
Moreover, any two   components at a separation pair
in a normalized 1-planar drawing are separated by an uncrossed copy
of a multi-edge that excludes a crossed
edge between any two vertices of the components, see
Figures~\ref{fig:3components} and \ref{fig:framedWconf}.
In consequence, the  book embedding of the inner components at a separation pair
of a 1-planar graph will be
simpler than the one for 3-connected graphs,
as opposed to  \cite{bbkr-book1p-17}, where extra pages are used.

Some \emph{coloring schemes}  for the edges of a 1-planar (multi-)
graph have been proposed \cite{ehklss-tm1pg-13b, el-racg1p-13}.
Here the uncrossed edges in a 1-planar drawing are first colored
\emph{black}  and  the crossed ones   \emph{green},
\emph{red} or \emph{purple}. The color is specified later, where
  we will recolor  some black binding edges, called pre-cluster edges.
If $E_b$  is the set of black   edges, then $G[E_b]$
is the \emph{planar skeleton} of $G$. Here all crossed edges are
removed.  We obtain a   triangulated
planar multigraph, that may have separation pairs, if exactly one edge
from each pair of crossed edges is removed.

At last, we need some terms for a linear ordering of the vertices. For sets of
vertices $U$ and $W$ let $U<W$ if $u<w$ for every $u \in U$ and $w
\in W$.
 An edge $\edge{u}{w}$ \emph{spans} vertex $v$ if $u < v <
w$. Thus two edges with
distinct vertices twist if and only if each of them
 spans exactly one vertex of the other edge.
An \emph{interval} $[u,w]$
consists of all vertices $v$ with $u \leq v \leq w$. Vertex $v$  is \emph{inside}
$[u,w]$ if $u \leq v \leq w$ and
\emph{outside}  if  $v \leq u$ or $v \geq w$. Thus the
vertices on the boundary are both  in and outside. Obviously,
two edges  do not twist if there is an interval such that both vertices of one
of them are in  and the vertices of the other   are outside the
interval. Let $[W]$  denote the interval containing exactly
 the vertices of a set $W$ and let $[W,v] = [w,v]$,
where $w$ is the
least vertex of $W$, and similarly for $[v,W]$.

\section{Book Embedding  of 2-Level Graphs} \label{sec:2-level}

 The \emph{peeling technique}, introduced by Heath~\cite{heath1984embedding},
has been used in all  later approaches for an upper bound of the book
thickness of generalized planar graphs \cite{ %abk-bt1pg-15,
bbkr-book1p-17,bdggmr-framed-24, df-stackqueue-18}, and, notably, in
Yannakakis 4-page algorithm for the planar graphs \cite{y-epg4p-89}.
It  decomposes a graph into 2-level graphs and computes a \emph{leveling}  of the
vertices of a graph, such that there are layered separators
\cite{d-glls-15}.

We use the planar skeleton $G[E_b]$
of a normalized 1-planar multigraph $G$.
The vertices in the outer face of (a planar drawing of) $G[E_b]$ are at level
zero.  Vertices are at level $\ell+1$ if they are in the outer face
when all vertices at levels at most $\ell$
are removed. Obviously, there are no edges between vertices in levels $i$ and
$j$ for  $|i-j| >1$ in $G[E_b]$. By the normalization, this also holds for $G$. In consequence,
the book embedding of a 1-planar graph can be composed of the book embedding
of its 2-level subgraphs at odd and even levels, so that the book
thickness  of $G$  is at most twice the book thickness of its 2-level subgraphs.

In the remainder of this Section, we first compute the vertex ordering
using the planar skeleton $G[E_b]$.
We will extend Yannakakis nested method \cite{y-epg4p-89}
 and combine  distinguished sets  of blocks into   a cluster  and a
 super-block, respectively.  Then we color edges and
obtain  an embedding of any  2-level 1-planar (multi-) graph  in six pages.

\subsection{Planar 2-Level Graphs} \label{sect:vertex-ordering}
We recall basic notions from \cite{y-epg4p-89} and extend them for our approach.
Familiarity with Yannakakis algorithm on planar 2-level  graphs will be helpful.
% Basically, we introduce clusters and
% super-blocks as an extension of the traversal of blocks by the nested method,
% and a novel edge coloring and page assignment.

Forthcoming, let $H$ be a
normalized planar multigraph  induced by
a cycle $O=v_0,\ldots, v_t$ of level $\ell$ vertices, called \emph{outer vertices},
and of the level $\ell+1$ vertices in the interior of $O$, called
\emph{inner vertices}, where $\ell \geq 0$. Graph $H$ is a 2-level
subgraph of the planar skeleton $G[E_b]$, where $G$ is a normalized 1-planar multigraph.
The   vertices of $O$  are ordered $v_0 < \ldots < v_t$, which is the clockwise traversal
(cw-order) of $O$ from $v_0$.   We have $t \geq 1$ due to multi-edges.
The inner vertices induce the \emph{inner subgraph}  that is composed of 2-connected components.
The outer cycle of each 2-connected component is called a \emph{block},
 see Figure~\ref{fig:levelgraph}.
A block may be small and consist of a single vertex or of two vertices with
a (multi-) edge.
Blocks are traversed in opposite direction of $O$, that is in
counterclockwise order (ccw-order).
The edges of $H$ are \emph{outer edges} in the
outer cycle $O$, \emph{outer chords} between non-consecutive outer
vertices in the interior of $O$, \emph{binding edges} between inner
and outer vertices (in this direction), that are classified into
\emph{forward} and \emph{backward binding}, and \emph{inner edges}
between two vertices that are consecutive for a block.
% The last edge $\edge{v_t}{v_0}$ of $O$ is considered as an outer chord.
Outer chords outside $O$ have been treated at the previous level. An
 \emph{inner chord}  between two non-consecutive inner vertices of a block
is either a crossed edge of $G$ or  it shall be considered at the next level.

\begin{figure}[t]
\centering
\includegraphics[width=0.7\textwidth]{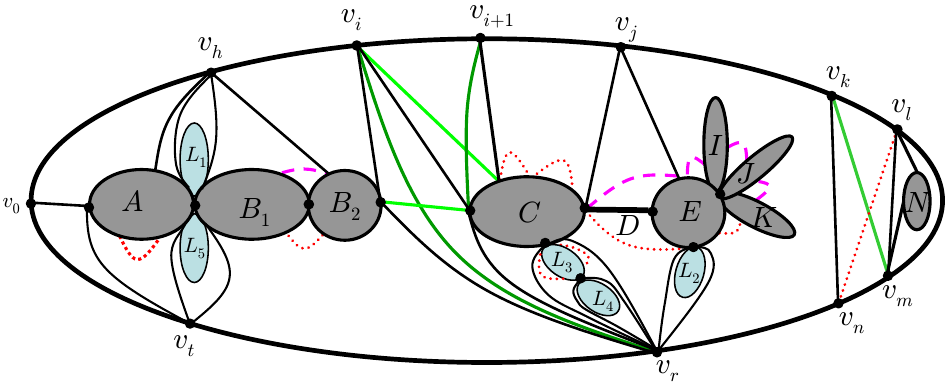} %Figures7pages1l
% levelgraph.ppt  before %book1planar2.ppt
  \caption{ %\added{change colors crossed chords and dotted chord $(v_i,v_r)$}
A sketch of a 1-planar 2-level graph with blocks drawn gray if they are uncovered
and blue if they are   covered.~Red edges are dotted and purple ones
are dashed.~Most vertices and binding edges
are omitted to avoid an overloading  of the picture.~The vertex ordering is
  $v_0,  \{A,  L_5, L_1\}, v_h,   B_1,B_2, v_i,
  v_{i+1},  \{C, L_3, L_4\}, v_j, D,\{E,L_2\} , \ldots,v_l, N, v_m, \ldots, v_t$,
where super-blocks are in brackets.
  }
  \label{fig:levelgraph}
\end{figure}
%There are super-blocks
% $\{A,  L_5\}, \{B_1,B_2\}, \{C, L_3, L_4\}$ and $\{D,E, L_2\}$.

A face  between the outer cycle
  and the inner subgraph is called an \emph{interior face}.
It is a triangle  or a quadrangle, where each quadrangle contains a pair
of crossed edges of $G$. %Hence, a planar 2-level multigraph $H$ is 2-connected.
Each interior face $f$ has
outer vertices and thus a \emph{first outer vertex} $\alpha(f)$  and
a \emph{last outer vertex} $\omega(f)$, which are the least and the last
outer vertex in the boundary of $f$. Clearly, $\alpha(f)=\omega(f)$
is possible, and all (three or four) vertices of $f$ can be outer vertices.

The set of blocks   $M$ of the inner subgraph is structured as follows.
 Any two blocks may
share a vertex, which is a \emph{cutvertex} of $M$.
A connected component of
$M$ is called a \emph{block-tree}, which is a cactus consisting of
blocks with branches at cutvertices. Each block-tree has a distinguished first block, called the \emph{root}.
Any two block-trees are separated by an outer  chord   or by a
quadrangle  that can contain an outer chord.
Each block $B$ has a least vertex $\lambda(B)$, called the
\emph{leader}, which is the cutvertex of $B$ and its parent
if $B$ is not the root of a block-tree.  A cutvertex may be the
leader of several blocks,  that are \emph{siblings}  and are
ordered clockwise at the cutvertex. The \emph{root}  of a block-tree is special.
It may be \emph{elementary} and consist of a single vertex, see
Figure~\ref{fig:elementary}. In fact, the inner subgraph may be empty or consist
of a single vertex.
A block-tree $\mathcal{T}$ has a \emph{first face}
$f_{\mathcal{T}}$, which is the least face containing a vertex of any block of
$\mathcal{T}$.  %It is uniquely determined by the ordering of the interior faces.
In addition, it has a last binding edge $\edge{b_0}{v}$ in its boundary,
whose vertices   $b_0$ and $v$ are called
the   \emph{first vertex} and the \emph{last outer vertex}
 of $\mathcal{T}$,  denoted $\lambda(\mathcal{T})$
and $\omega(\mathcal{T})$, respectively.
Vertex $b_0$ is the leader of the root of $\mathcal{T}$.
The first outer vertex of $f_{\mathcal{T}}$ is denoted by $\alpha(\mathcal{T})$,
and is called the \emph{first outer vertex}    of $\mathcal{T}$.
Thus $f_{\mathcal{T}}$ contains the vertices
$\alpha(\mathcal{T}), \lambda(\mathcal{T})$ and $ \omega(\mathcal{T})$
and  another inner or outer vertex if it is a quadrangle. %, see Figure~\ref{fig:firstface}.

 Each inner vertex is in a single block, except if it is a cutvertex. For
  uniqueness,   the leader is assigned to the block that is closest to the root.
This   reflects the ordering of the vertices \cite{y-epg4p-89}.
We denote the \emph{set of vertices assigned to block} $B$ by $V(B)$.
If $B$ is a non-root block with vertices $b_0, b_1,\ldots, b_q$,
then $b_0 \not\in V(B)$, $b_0$  is the leader of $B$, that is $b_0=\lambda(B)$,
$b_1$ and $b_q$ are the
\emph{first}   \emph{last}  vertex and $\edge{b_0}{b_1}$
and   $\edge{b_0}{b_q}$ \emph{first} and   \emph{last   edge} of $B$, respectively.
This is assumed from now on.
Moreover, we say that   vertex $b$
 is in  block $B$ and that $B$ is the block of $b$ if it is clear  from the context,
and similarly for edges and block-subtrees.

The \emph{dominator} of block $B$, denoted $\alpha(B)$, is the
least outer vertex in the face containing  the last edge of $B$.
Vertex $\alpha(B)$ \emph{sees} $B$ according to     \cite{y-epg4p-89}.
 Similarly, there is a \emph{last outer vertex}
$\omega(B)$, which is the  least outer vertex in the face containing
the first edge   of $B$. Hence, any block $B$ has three distinguished vertices
 $\alpha(B), \omega(B)$  and $\lambda(B)$, or $v_i, v_k$ and $b_0$
if this is appropriate.
Note that there may be edges $\edge{\lambda(B)}{v}$ with outer vertices $v$  such that
  $v < \alpha(B)$ and  $v > \omega(B)$, respectively.
Obviously, for any block $B$ of a block-tree $\mathcal{T}$, we have
$\alpha(\mathcal{T})\leq\alpha(B)$ and $\lambda(\mathcal{T})\leq\lambda(B)$,
where equality holds for the root,
and $\omega(\mathcal{T})\geq\omega(B)$, where $\omega(\mathcal{T})>\omega(B)$
for every block $B$ of $\mathcal{T}$ is possible.

The ordering of the vertices on the outer cycle induces an ordering of the interior faces,
blocks, block-trees, outer chords, and binding edges, denoted by
$x \prec y$ if $x$ precedes $y$. The interior faces are ordered clockwise
according to their first outer vertex and in counterclockwise  order  if faces have the same first outer vertex.
 If $x$ and $y$ are a block, a block-tree, (a copy of) an
  outer chord, or a (copy of a) binding edge,
 then let $x \prec y$ if the first face containing an element of $x$ precedes
the first face containing an element of $y$.
In particular,   blocks are ordered according to their dominator, as in \cite{y-epg4p-89},
and in ccw-order at a common dominator.

For our notation, we will use capital letters $A,B,C$ for blocks, small letters $a,b,c$
for inner vertices, $u,v$ for outer vertices,  and  $\alpha, \lambda$ for least
and $\omega$ for last elements. \\

\begin{figure}[t]  % Fig 3
\centering
\subfigure[ ] {    %0.35\textwidth}
    \includegraphics[scale=0.6]{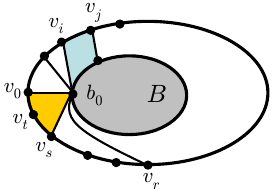}  %  Figure7pages2.ppt
      \label{fig:elementary}
\hspace{2mm}
  }
\subfigure[ ] {    %0.35\textwidth}
    \includegraphics[scale=0.6]{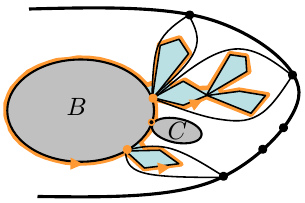}  %  Figure7pages2.ppt
      \label{fig:super-block}
\hspace{2mm}
  }
\subfigure[ ] {    %0.35\textwidth}
    \includegraphics[scale=0.7]{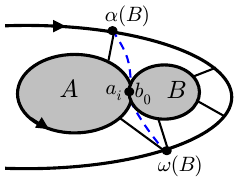}   %  Figure7pages2.ppt
      \label{fig:separate}
  }
 \caption{(a) An elementary block $b_0$. (b) Block $B$ and its expansion to
a super-block $B^+$, and
(c) a partition   induced by block $B$, indicated by the blue dashed line.
  }
  \label{fig:firstface}
\end{figure}

 Yannakakis \cite{y-epg4p-89} has proposed two methods for the traversal of blocks.
Each block $B$ is traversed
individually in ccw-order in the \emph{consecutive method},
such that its vertices are consecutive at this moment,
except for the leader, which is in the parent block, in general.
%   Its vertices form an interval $[B]$, generally without the leader, and are placed between
 %  the dominator  the its successor on the outer cycle.
Blocks   with the same dominator are visited one after another
in ccw-order at the dominator,
where the blocks may be in one or more block-trees.
% The edges of any block are embedded in a single page.
  In the \emph{nested method},  the  set of blocks of  a block-tree
with the same dominator is traversed by depth-first search \cite{clrs-ia-01}.
Their boundary is traversed in a since sweep in ccw-order.
Each   vertex is listed exactly once at its first appearance.
 The nested method lays out the vertices of $B$ in a series of intervals
  between two consecutive  cutvertices of $B$,
such that the structure of the intervals one-to-one corresponds
to the traversed block-subtree.
In both methods, the visited vertices are
 placed  to  the right of the dominator, and to the left of its
successor if the dominator is the least outer vertex $v_0$.

Yannakakis \cite{y-epg4p-89} has proved that
  $\alpha(A) \leq \alpha(B)$ and  $\omega(B) \leq \omega(A)$
  if  block $A$ is the parent of $B$ and his requirements are fulfilled, in particular
$\alpha(B) < V(B) < \omega(B)$ for any block $B$.
   Hence, there is an interval $[\alpha(B), \omega(B)]$ within $[\alpha(A),
   \omega(A)]$ in the vertex ordering,
such that $[\alpha(B), \omega(B)]$ does not contain vertices of $A$
if $\alpha(A)<\alpha(B)$.
This relation  holds for both  methods. The leader $\lambda(B)$ resp.
$\lambda(A)$ is outside the interval, in general.

We need a generalization, since there are separation pairs. These
are unavoidable for  1-planar graphs, whereas they can be
excluded for planar
graphs \cite{y-epg4p-89}.  In fact, if $\edge{x}{y}$ is any uncrossed edge in a 1-planar
drawing of $G$, then $G$ can be extended by   inner components, such that
$\seppair{x}{y}$ is a  separation pair,
for example, by adding a vertex $z$ and edges $\edge{x}{z}$ and $\edge{y}{z}$
or a graph with a W-configuration.
%In other words, the vertices of any uncrossed edge can be a separation pair.
We say that   block $B$   is \emph{covered} by the outer vertex $v$ if
$v=\alpha(B)=\omega(B)$,
see Figures~\ref{fig:levelgraph} and \ref{fig:super-block}.
Then all binding edges with an inner vertex in $B$ are incident to $v$.

 The following Lemma extends statements by Yannakakis \cite{y-epg4p-89}, see
also  \cite{bbkr-book1p-17}.

\begin{lemma}     \label{lem:partition}    % Lem 3.1
Let $H$ be a normalized planar 2-level multigraph.
% with outer cycle $C=v_0,\ldots, v_t$,

\begin{enumerate}[(i)]
 \item
Each  block $B$ has a first and a last outer vertex
%$\alpha(B)$ and $\omega(B)$
such that $\alpha(B)\leq \omega(B)$.
\item
Block $B$   is in an inner component at an inner separation pair
 if and only if  $\alpha(B)=\omega(B)$, that is $B$ is covered.
\item
If $B$ is covered by $v$,
then the face    containing
 $v$  and  the first and last  edge of $B$, respectively,
contains an uncrossed copy of edge $\edge{\lambda(B)}{v}$.
In other words,  $B$ is enclosed by an uncrossed multi-edge
$\edge{\lambda(B)}{v}$ with $v=\alpha(B)=\omega(B)$
and cutvertex $\lambda(B)$.
\item
For every uncovered block $B$,  graph $H-\{\alpha(B), \lambda(B), \omega(B)\}$
 partitions into parts $H_1$ and $H_2$, where
  $H_2$ contains the outer vertices
 $v_{i+1}, \ldots,  v_{k-1}$
 and the vertices of the  following blocks, where
$v_i = \alpha(B)$ and  $v_k=\omega(B)$:   block $B$,
  the blocks dominated by $v_i$ that succeed $B$ in
ccw-order at $v_i$,   the  blocks dominated
by $v_j$ for $i<j<k$,  and   all blocks $B'$  dominated by $v_k$, such
that any path from a vertex
of $B'$ to the root of the block-tree  %containing $B$
passes through or ends at  $\lambda(B)$. The other part is $H_1$.
\item
If $\edge{v_i}{v_k}$ is an outer chord or an outer edge, then
$H- \{v_i, v_k\}$ partitions similarly by its $j$-th copy.
Any block    dominated by $v_i$ is in $H_2$ if it
  succeeds the $j$-th copy of $\edge{v_i}{v_k}$,
and any block in $H_2$  that is
dominated by $v_k$ has an ancestor   that succeeds $\edge{v_i}{v_k}$.
In other words, these blocks are to the left of the $j$-th copy of $\edge{v_i}{v_k}$,
that is $\edge{v_i}{v_k} \prec B$ for each such block $B$.
 The other part is $H_1$.
\end{enumerate}
\end{lemma}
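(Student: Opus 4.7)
My plan is to prove all five parts by combining the planar embedding of $H$ with the normalization established in Lemma~\ref{lem:normalform}; parts (i)--(iii) are definitional or follow from a separation-pair argument, while (iv)--(v) are Jordan-curve style partitions. Throughout, the key orientation is that outer vertices are indexed clockwise around $O$ while each block is traversed counter-clockwise, so the "first" edge of a block lies against a later segment of $O$ than the "last" edge.

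For (i), each block $B$ lies inside $O$, and because $H$ is 2-connected (every block is connected to $O$ through binding edges, and the two boundary faces of $B$ that are interior faces of $H$ contain outer vertices) both $\alpha(B)$ and $\omega(B)$ are well defined. The ccw-vs-cw orientation forces the face containing the last edge $\edge{\lambda(B)}{b_q}$ to lie clockwise before the face containing the first edge $\edge{\lambda(B)}{b_1}$ along $O$, giving $\alpha(B)\leq\omega(B)$. For (ii), if $\alpha(B)=\omega(B)=v$, then the two interior faces at the first and last edges of $B$ both contain $v$, so removing $\{\lambda(B),v\}$ disconnects all of $V(B)$ from $O$; conversely, a block $B$ in an inner component at an inner separation pair $\seppair{x}{y}$ must meet $O$ only through $x$ and $y$, and since $\lambda(B)$ is a cutvertex of the block-tree and exactly one of $\{x,y\}$ is an outer vertex $v$, we get $\alpha(B)=\omega(B)=v$. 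Part (iii) then follows from Lemma~\ref{lem:normalform}: the separation pair $\seppair{\lambda(B),v}$ forces an uncrossed copy of $\edge{\lambda(B)}{v}$ drawn in the face separating $B$'s component from the outer component at $\lambda(B)$, and this very face contains both the first and last edges of $B$.

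For (iv), I would build a closed Jordan curve $\gamma$ by concatenating the clockwise arc of $O$ from $v_i=\alpha(B)$ to $v_k=\omega(B)$ with the two edges from $\lambda(B)$ realizing $\alpha(B)$ and $\omega(B)$ (i.e., the last and first edges of $B$ together with the binding edges of the faces $f_\alpha, f_\omega$). Removing $\{v_i,\lambda(B),v_k\}$ from $H$ splits $H$ into the part enclosed by $\gamma$ and the part outside, and I would read off the blocks listed in the statement by walking along $\gamma$ on the inside: $B$ itself, the siblings of $B$ at $v_i$ that come after $B$ in ccw-order at $v_i$, every block whose dominator $v_j$ satisfies $i<j<k$, and precisely those blocks dominated by $v_k$ whose block-tree path to the root passes through $\lambda(B)$ (the remaining $v_k$-dominated blocks sit outside $\gamma$, on the other side of $\lambda(B)$). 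Part (v) is analogous, using the $j$-th copy of the chord or edge $\edge{v_i}{v_k}$ as the separating curve in place of the two binding edges through $\lambda(B)$; the blocks in $H_2$ are then those lying in the region to the left of this copy, which is exactly the set of blocks $B'$ with $\edge{v_i}{v_k}\prec B'$.

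The main obstacle is part (iv), specifically verifying that the somewhat intricate list of blocks in $H_2$ is \emph{exactly} what is enclosed by $\gamma$. The delicate cases are the $v_k$-dominated blocks: one must argue, using the monotonicity $\alpha(\mathcal{T})\leq\alpha(B')$ and $\omega(\mathcal{T})\geq\omega(B')$ from Yannakakis together with the ccw-ordering of siblings at the cutvertex $\lambda(B)$ and at $v_k$, that a $v_k$-dominated block lies inside $\gamma$ iff its ancestral path in its block-tree runs through $\lambda(B)$. Beyond this bookkeeping, both (iv) and (v) are immediate consequences of the Jordan curve theorem applied to a normalized planar multigraph.
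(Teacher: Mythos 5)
Your proposal follows essentially the same route as the paper: parts (i)--(iii) from the face structure, the opposite cw/ccw traversal directions, and the multi-edge guaranteed by Lemma~\ref{lem:normalform}; parts (iv)--(v) from a Jordan curve through $v_i$, $\lambda(B)$, $v_k$ (resp.\ the $j$-th copy of the chord), with the same case analysis for blocks dominated by $v_i$ and $v_k$. The only cosmetic differences are that the paper closes the curve through the outer face rather than along the arc of $O$, and routes it through binding edges $\edge{\lambda(B)}{v_i}$ and $\edge{\lambda(B)}{v_k}$ (added inside a quadrangle if absent) so that it meets the graph only in the three removed vertices, which is slightly cleaner than your curve that also traverses the first and last edges of $B$.
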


\begin{proof}
Every interior face of a planar drawing of $H$ has an outer vertex and thus a first and last  outer
vertex. The vertices $\alpha(B)$ and $\omega(B)$  of block $B$ are
in the face containing the last and the first  edge of $B$,
respectively, if $B$ is non-elementary. As the outer cycle and blocks are
traversed in opposite direction, we have  $\alpha(B) \leq \omega(B)$.
If $B$ is elementary, then it is the root of a block-tree $\mathcal{T}$.
 Now $\alpha(B)$ and $\omega(B)$
are taken from the first face of $\mathcal{T}$, where
$\alpha(B) < \omega(B)$ holds.

For (ii), if $\seppair{b}{v}$ is an inner separation pair with an inner vertex
 $b$ and an outer vertex $v$,
then $\alpha(B) = \omega(B) = v$ for all blocks from  inner
components   of $H-\{b,v\}$. Hence, all these blocks are covered by
$v$. Conversely,
$\seppair{\lambda(B)}{v}$ is an inner separation pair if block $B$ is covered by $v$,
since there are faces containing $v$ and the first and last edge of $B$, respectively.

There is an uncrossed multi-edge
$\edge{\lambda(B)}{\alpha(B)}$ enclosing $B$ by Lemma~\ref{lem:normalform},
 such that the face containing
$\lambda(B)$ and the first and last edge of $B$, respectively, contains a copy of
$\edge{\lambda(B)}{\alpha(B)}$ in its boundary,
which proves (iii).

For (iv), if
   $B$ is not the root of a  block-tree, then its leader
 $b_0$ is a cutvertex of the inner subgraph  that is
partitioned by the removal of $b_0$. Clearly, the outer cycle $O$
is partitioned by the removal of $v_i=\alpha(B)$ and
$v_k=\omega(B)$. There is a closed curve $\Gamma$ from $v_i$ via $b_0$ to
$v_k$   that partitions $H$, see Figure~\ref{fig:separate}. The curve first follows (any copy of) the binding edge
$\edge{b_0}{v_i}$ that can be added in the interior of a quadrangle
if it does not exist. It passes $b_0$ and then follows (any copy of)
$\edge{b_0}{v_k}$ that can be added as well. The curve is closed along the outer face.
Hence, the   removal of $\{v_i, b_0, v_k\}$  decomposes $H$, such that vertices $v_{i+1},\ldots, v_{k-1}$ and
block $B$ are on one side of $\Gamma$, that is in $H_2$. By induction, also all
blocks dominated by $v$ with $v_{i+1} \leq v \leq  v_{k-1}$ are in $H_2$.
The blocks dominated by $v_i$ and $v_k$ need a special attention.
If block $B'$ is dominated by $v_i$, then $B'$ and $B$ are on the same side of $\Gamma$
  if and only if $B'$ succeeds $B$ in ccw-order at $v_i$. Then $B'$ is in part $H_2$.
If block  $B''$ is dominated by $v_k$ and is on the same side of $\Gamma$ as
$B$, that is  $B''$ is in $H_2$, then $\omega(B'') \leq  v_k$, such that
$B''$ is covered by $v_k$. Then there is a path from any vertex of $B''$ to the root of its
block-subtree through  $\lambda(B)$. Conversely, $B''$ is on the
same side   of $\Gamma$  as $B$  if any path from a vertex of $B''$ to the root of the block-tree
passes through   $\lambda(B)$.
If $B$ is the root of a block-tree  $\mathcal{T}$, then $H$ partitions similarly by
the removal of $\{\alpha(B), \lambda(B), \omega(B)\}$, where part $H_1$ may be empty
and paths from $B''$ end at $\lambda(B)$.

Obviously, the removal of the vertices of an outer chord $\edge{v_i}{v_k}$
decomposes $H$.
There is a curve $\Gamma$ using the $j$-th copy of $\edge{v_i}{v_k}$ for a
partition of $H$  into parts $H_1$ and $H_2$. Then all blocks that are
  dominated by $v_i$ and succeed the $j$-th copy are in the interior of $\Gamma$,
such that they are in part $H_2$, and similarly for all block dominated by $v_k$
that succeed the $j$-th copy of $\edge{v_i}{v_k}$, that is they are to its left.
Any such block has an ancestor that is dominated by $v_i$
and succeeds $v_i$.
Hence, $H$ partitions as claimed.
\end{proof}

We say that a partition of $H$  as in case (iv) and (v)  is
\emph{induced} by  a block and an outer chord or an outer edge, respectively.
%,
%and call $H_1$ the \emph{first} and $H_2$ the \emph{second part}.

The set of blocks of a block-tree with the same dominator forms a path if there
are no covered blocks, as shown in \cite{y-epg4p-89}. This
generalizes as follows.

\begin{lemma}  \label{lem:backbone}
If $S$ is the set of blocks of a block-tree $\mathcal{T}$ dominated by
a single outer vertex $v$, then $S$ is a block-subtree of
$\mathcal{T}$ consisting  of a path of uncovered blocks
$B_1,\ldots, B_k$ for some $k \geq 1$ to the right, called the \emph{backbone},
such that any other block  of $S$ is covered by $v$ and is to the left
of the backbone.
\end{lemma}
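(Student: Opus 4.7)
The plan is to derive the three claims --- connectivity of $S$ in $\mathcal{T}$, a unique chain of uncovered blocks, and covered blocks sitting to the left --- from the face-based definition of the dominator, the monotonicity $\alpha(A)\le\alpha(B)$ and $\omega(B)\le\omega(A)$ for a parent $A$ and child $B$, together with the partition results of Lemma~\ref{lem:partition}.

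I first observe that monotonicity makes $\alpha(\cdot)$ weakly increasing on any root-to-leaf path of $\mathcal{T}$, so the blocks on such a path with $\alpha = v$ form a contiguous segment. For the subtree structure I take $B,B'\in S$ and their least common ancestor $B^*$ in $\mathcal{T}$, which by monotonicity satisfies $\alpha(B^*)\le v\le\omega(B^*)$. If $\alpha(B^*) < v$ then $B^*$ is uncovered; applying Lemma~\ref{lem:partition}(iv) to the first $S$-block below $B^*$ on the path toward $B$ confines that whole branch to one side of the induced separator, and planarity prevents the other branch toward $B'$ from crossing this separator while still reaching $v$ as a dominator --- contradicting the role of $B^*$ as LCA. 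Hence $\alpha(B^*) = v$, so $B^* \in S$, and by the same argument every intermediate block on the paths $B^*\to B$ and $B^*\to B'$ lies in $S$, making $S$ connected.

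For the backbone I fix $A\in S$ and suppose it has two children $C,C' \in S$. If $C$ is uncovered, Lemma~\ref{lem:partition}(iv) partitions $H$ through $\{v,\lambda(C),\omega(C)\}$, placing $C$ and every block dominated by an outer vertex strictly between $v$ and $\omega(C)$ in $H_2$. Then $C'$ either lies in $H_2$, in which case $\omega(C')\le \omega(C)$ and the only way to retain $\alpha(C') = v$ while being trapped inside the $C$-region is $\omega(C') = v$, so $C'$ is covered; or $C'$ lies in $H_1$, which contradicts $\alpha(C')=v$ because no $H_1$-block has its last-edge face reaching the $v$-side of the separator. Thus $A$ has at most one uncovered child in $S$, and iterating this down the tree yields the descending chain $B_1,\dots,B_k$.

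Every remaining block of $S$ is then a covered sibling of some backbone block $B_i$; by Lemma~\ref{lem:partition}(iii) each is enclosed by an uncrossed multi-edge $\edge{\lambda(B_i)}{v}$, and because the backbone always descends into the rightmost uncovered branch in the ccw-ordering of siblings at the shared leader, these covered siblings necessarily precede $B_i$ in the ccw-order at $v$ and hence sit to its left in the induced vertex ordering. The principal obstacle I anticipate is this dichotomy step: ruling out two uncovered $S$-children at a common parent by combining Lemma~\ref{lem:partition}(iv) with the planar face structure, to argue that two distinct interior faces cannot simultaneously serve as the last-edge face of two uncovered siblings both dominated by $v$.
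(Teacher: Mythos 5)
Your high-level picture (a single descending chain of uncovered blocks with covered blocks of $S$ attached to its left) matches the statement, but the step that actually carries the lemma --- excluding two uncovered blocks of $S$ that are not in ancestor--descendant relation --- is precisely the step you flag as an ``anticipated obstacle'' rather than prove, and the sub-argument you sketch for it is unsound. You dispose of the case ``$C'$ lies in $H_1$'' by asserting that no block of $H_1$ can have $v$ as its dominator. That contradicts Lemma~\ref{lem:partition}(iv) itself, which explicitly places into $H_1$ the blocks dominated by $v_i$ that \emph{precede} $B$ in ccw-order at $v_i$; these are exactly the set $L$ of Lemma~\ref{lem:ordering}. So the partition induced by $C$ does not by itself exclude an uncovered $C'$ with dominator $v$ sitting in $H_1$. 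The same defect undermines your LCA-based connectivity argument, which rests on the identical claim that a branch confined to $H_1$ cannot reach $v$ as a dominator. (Note also that the paper's own proof does not attempt to prove connectivity of $S$ via an LCA; it only constructs the backbone and classifies the blocks whose leader lies in a backbone block.)

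The paper's mechanism is different and does not route through the partition lemma: it builds the backbone greedily, taking $B_{i+1}$ to be the rightmost descendant of $B_i$ dominated by $v$ (the last block in cw-order, at the least leader in $B_i$ of a block dominated by $v$), and then argues directly from planarity that any other block $B$ of $S$ with leader in a backbone block is covered --- otherwise the face containing the first edge of $B$ (which would have to contain $\omega(B)>v$) and the face containing the last edge of $B_{i+1}$ (which contains $v$) would force crossing edges in the planar skeleton. If you want to keep your top-down ``at most one uncovered child'' structure, you must replace the $H_1/H_2$ dichotomy by this kind of face argument at $v$: the interior faces with least outer vertex $v$ form a contiguous fan at $v$, and an uncovered block dominated by $v$ occupies a sub-interval of the annulus stretching from that fan to a face containing an outer vertex greater than $v$, so two such blocks must nest rather than lie in incomparable branches. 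Without that geometric input the dichotomy step, and hence the proof, is incomplete.
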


\begin{proof}
We proceed by induction. Let $B_i$ be the rightmost descendant of $B_{i-1}$,
that is    dominated  by $v$. Thus $\lambda(B_i)$ is the
 least vertex of $B_{i-1}$ that is a  leader of a block dominated by $v$
and $B_i$ is the last block in cw-order at $\lambda(B_i)$ that is
dominated by $v$. Then $B_i$ is uniquely determined, such that, by induction,
there is backbone $B_1, \ldots, B_k$ of uncovered blocks.
Any block $B$ with $\lambda(B)$ in $B_{i}$ for $i=1,\ldots, k-1$ and
$\lambda(B) \geq \lambda(B_i)$ % or  $\lambda(B)$ in $B_i$
and dominated by $v$
is covered by  $v$, since $v$ cannot dominate $B_{i+1}$, otherwise,
as the edges in the faces with the first edge of $B$ and the last edge of $B_{i+1}$
cross, which contradict planarity. As $B_k$ is the last uncovered block
dominated by $v$,   also  blocks $B$ with $\lambda(B)$ in $B_k$
are covered by $v$.
\end{proof}

We wish a vertex ordering $\alpha(B)< V(B) <\omega(B)$
for every block, as in \cite{y-epg4p-89},
which is impossible for covered blocks.  Instead, we merge
covered blocks into a super-block by a block-expansion,
and so remove all covered blocks.

An  inner separation pair  $\seppair{b}{v}$
% in a normalized  planar 2-level multigraph
is  \emph{maximal} if
  %\deleted{either $x$ and $v$ are outer vertices or $v$ is an outer  vertex and }
  $b$ is  in a block that is not covered by $v$.
We comprise   all inner components of
$\seppair{b}{v}$ into a \emph{cluster}, denoted $C(b,v)$.
Vertices $b$ is an articulation vertex if $v$ is removed,
and is called  the \emph{leader} of $C(b,v)$. A
cluster consists of a set of block-trees that are separated by copies
of the multi-edge $\edge{b}{v}$ and are connected at $b$.
% Such multi-edges do not matter any longer.
Clearly, all  blocks of $C(b,v)$ are covered by $v$.

\begin{definition}  %% page 11
Let $H$ be  a normalized planar 2-level multigraph.
A   \emph{super-block} $B^+$    consists of the vertices of
 an uncovered block  $B$, called the \emph{root} of $B^+$,
  and for every   vertex $b$
assigned to $B$, the  vertices from all clusters
  $C(b,v)$.
\end{definition}

We say that   $B$ is \emph{expanded} by (the vertices, the
blocks of) a  cluster $C(b,v)$,
 see Figure~\ref{fig:super-block}.
There may be several outer vertices $w_1<\ldots< w_r$
with a cluster $C(b,w_i)$ at $b$. Also,  $B$ can be expanded at several of its vertices
with the same or with different outer vertices for the clusters. At last,
there can be inner vertices in several blocks with a cluster $C(a,w_i)$ for
the same outer vertex $w_i$.
As block $B$ is the root of $B^+$, every super-block is uncovered.
Conversely, if $A$ and $B$ are uncovered blocks, then the super-blocks
$A^+$ and $B^+$ with root $A$ and $B$ are different, even if $A$ and
$B$ have the same dominator. Hence, there is a one-to-one correspondence
between uncovered blocks and super-blocks. In other words, we absorb
covered blocks by super-blocks, but we need blocks for the coloring and
embedding of the edges, and for the outer cycle at the next level.
A cluster $C(b,v)$ is a block-subtree and so is a super-block.
 Clearly, every block is in exactly one super-block,
such that the inner subgraph of $H$ can be restructured.
%The decomposition of  any block-tree into its super-blocks is
%  called an \emph{expanded  block-tree}.
Now super-blocks take  the role of blocks in \cite{y-epg4p-89}.
For convenience, we
adopt terms from blocks, such as   leader, dominator,   last outer vertex,
 parent, descendant or sibling. % in an expanded block-tree.

%\begin{remark} \label{rem:superblock}
We expand a  block at any of its vertices and create large super-blocks,
which is an   ``eager version\rq{}\rq{} for block-expansions.
In \cite{b-bookmap-20} we have used a ``lazy version\rq{}\rq{},
where an uncovered block $B=b_0,\ldots,b_q$ is only expanded by
the clusters $C(b_i,v)$
if $v=\omega(\mathcal{T})$ is the last outer vertex of the block-tree
containing  $B$.
 The ``lazy version\rq{}\rq{} will do as well, with an adaptation
for the classification of edges and their assignment to pages.
We use the ``eager version\rq{}\rq{} as it is well-suited for 1-planar graphs.
%\end{remark}
 %
Super-blocks can be enlarged even more by using the nested method,
such that all super-blocks with the same dominator are
merged into a single one.
This is disadvantageous in our case, as the saving of
page $\chi(B)$ in Theorem~\ref{thm:10-pages} fails.

\begin{lemma} \label{lem:no-span}   % Lem3.3
If  an  uncovered block $B$ with $v=\omega(B)$,
  an  outer chord or an outer edge $\edge{u}{v}$  induces
a partition of $H$,
 then $v$ does not dominate any super-block from the second part $H_2$.
\end{lemma}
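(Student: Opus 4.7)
Plan. I would argue by contradiction: suppose there is a super-block $(B')^+$ in $H_2$ whose root block $B'$ is dominated by $v$. Since the root of a super-block is always uncovered, we have $\alpha(B') = v$ and $\omega(B') > v$. The goal is to derive a contradiction in each of the two induced-partition settings.

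For a partition induced by an outer chord or outer edge \edge{u}{v} (Lemma~\ref{lem:partition}(v)), the argument is direct. The $j$-th copy of \edge{u}{v} separates $H_2$ from the rest of $H$, so every face incident to a vertex of $B'\subseteq H_2$ has all of its outer vertices in the interval $[u,v]$. In particular, the face containing the first edge of $B'$ contains no outer vertex strictly greater than $v$, hence $\omega(B')\leq v$. Combined with $\alpha(B') = v \leq \omega(B')$ this forces $\omega(B') = v$, contradicting that $B'$ is uncovered.

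For a partition induced by an uncovered block $B$ with $\omega(B) = v$ (Lemma~\ref{lem:partition}(iv)), I would split on the location of $B'$ in the block-tree. By Lemma~\ref{lem:partition}(iv) every path from $B'$ to the root of its block-tree passes through (or ends at) $b_0 := \lambda(B)$, so $B'$ lies in the subtree hanging off $b_0$; since $B$ itself is not dominated by $v$ (being uncovered, $\alpha(B) = v_i < v$), either (i) $B'$ is a proper descendant of $B$ or (ii) $B'$ descends from a sibling of $B$ at $b_0$. In case (i), iterating Yannakakis's parent-child inequality $\omega(\text{child}) \leq \omega(\text{parent})$ gives $\omega(B') \leq \omega(B) = v$, contradicting $\omega(B') > v$. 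In case (ii), I would show that no sibling of $B$ at $b_0$ can lie in $H_2$. The defining identities $\alpha(B) = v_i$ and $\omega(B) = v$ force the faces $(b_0, b_q, v_i)$ and $(b_0, b_1, v)$ to exist, which means the binding edges \edge{b_0}{v_i} and \edge{b_0}{v} sit in the ccw rotation around $b_0$ immediately adjacent to the last edge and first edge of $B$, respectively. These two binding edges are exactly the portions of the partitioning curve $\Gamma$ constructed in the proof of Lemma~\ref{lem:partition}(iv) that meet $b_0$. Hence at $b_0$ the $H_2$-sector of $\Gamma$ is bounded by \edge{b_0}{v_i} and \edge{b_0}{v} and, going ccw between them, contains exactly the edges of $B$; every other edge at $b_0$ lies in the complementary $H_1$-sector, so any sibling of $B$ at $b_0$, together with its entire subtree, lies in $H_1$ rather than in $H_2$.

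The main obstacle is this local geometric step in case (ii): pinning down exactly which sector at $b_0$ contains each sibling of $B$. It rests on combining the defining property $\omega(B) = v$, which fixes the binding edges incident to $b_0$, with the explicit trajectory of $\Gamma$ through $b_0$ as built in the proof of Lemma~\ref{lem:partition}(iv). Once that rotational picture is in place, both subcases of the block case and the entire chord/edge case reduce to short appeals to the parent-child inequality and the partition statement of Lemma~\ref{lem:partition}.
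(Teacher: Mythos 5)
Your core mechanism is the same one the paper uses: a block of $H_2$ that is dominated by $v$ is forced to satisfy $\omega = \alpha = v$, i.e.\ to be covered, because $v$ is the last outer vertex incident to vertices of $H_2$; since the root of a super-block is by definition uncovered, $v$ cannot dominate a super-block of $H_2$. (The paper states this in the equivalent ``absorption'' form: every block of $H_2$ dominated by $\omega(B)$ is covered, hence lies in a cluster $C(a,\omega(B))$ and is merged into a super-block rooted elsewhere.) Your treatment of the outer-chord/outer-edge case is exactly this argument and is fine.

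Where you diverge is the block-induced case, and the divergence costs you. The uniform argument still applies there verbatim: the curve $\Gamma$ through $\alpha(B),\lambda(B),\omega(B)$ confines all outer vertices incident to $H_2$ to the interval $[\alpha(B),\omega(B)]$, so any block $B'$ of $H_2$ with $\alpha(B')=v=\omega(B)$ has $\omega(B')\le v$ and is covered --- no information about where $B'$ sits in the block-tree is needed. Instead you split on descendant versus sibling-subtree. Case (i) is correct (the parent--child inequality $\omega(\mathrm{child})\le\omega(\mathrm{parent})$ is just a special case of the same bound). Case (ii), however, rests on the claim that no sibling of $B$ at $b_0$ lies in $H_2$, which you support only by a sketched rotational argument at $b_0$ and yourself flag as the main obstacle. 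That claim is also in tension with how you set up the case split: you invoke Lemma~\ref{lem:partition}(iv) to say that $B'$ lies in the subtree hanging off $\lambda(B)$, but that same clause of the lemma places into $H_2$ blocks dominated by $v_k$ whose root-paths pass through $\lambda(B)$ --- a condition that siblings of $B$ at $b_0$ can satisfy --- so you would have to reconcile your sector claim with the lemma you are citing before case (ii) closes. None of this is needed: drop the case split, apply the ``$v$ is the last outer vertex of $H_2$'' bound directly as you already do for chords, and the proof matches the paper's.
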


\begin{proof}
Part $H_2$ is in the interior of a curve $\Gamma$ through vertices
$\alpha(B), \lambda(B)$ and $\omega(B)$, as shown in (the proof of)
Lemma~\ref{lem:partition}. If $D$ is a block in $H_2$ that is
dominated by $\omega(B)$, then it is covered by $\omega(B)$,
since $\omega(B)$ is the last outer vertex incident to vertices of $H_2$.
  Then $D$ is in a cluster $C(a, \omega(B))$ with vertex
$a$ in some uncovered block $A$, and $A$ is a block of $H_2$, since it is
in the interior of $\Gamma$.
% The cluster  $C(a, \omega(B))$ is last in the
% $r$-cluster at $a$, otherwise, edge $\edge{a}{w}$ for $w > \omega(B)$
% crosses a binding edge incident
% to $\omega(B)$ in the boundary of the face containing the first edge of $B$, namely
% $\edge{b_0}{\omega(B)}$ or $\edge{b_1}{\omega(B)}$,
% which contradicts planarity.
%
Now  block  $A$ is expanded by $C(a, \omega(B))$
that includes block $D$.
By induction, any block in part $H_2$ that is dominated by  $\omega(B)$
is merged into an uncovered super-block, such that  $\omega(B)$ does not dominate
super-blocks in $H_2$.

The case  for  an outer chord or an outer edge $\edge{u}{v}$ is similar.
\end{proof}

\subsection{Vertex Ordering} \label{sect:vertexordering}

The \emph{vertex ordering} (or layout) $L(H)$ of a normalized
planar 2-level multigraph $H$ is computed as follows.
First, compute all super-blocks and their dominator.
The vertices   of the outer cycle are already ordered
$v_0 < \ldots < v_t$. For every outer  vertex $v_i$ with $i=0,\ldots,t$,
visit the super-blocks
dominated by $v_i$ in counterclockwise order  at $v_i$ and
place the  vertices  assigned to each of  them right after $v_i$.
The vertices are placed
  immediately to the left of $v_1$ if the least outer
vertex $v_0$ is the dominator,
as suggested by Yannakakis \cite{y-epg4p-89},
such that they are to the right of vertices
from the previous level, including the dominator of the outer cycle,
in general. Now we have
$v < V(A^+) < V(B^+) < v\rq{}$ if $v$ dominates super-blocks $A^+$
and $B^+$ with $A^+ \prec B^+$
and $v\rq{}$ is the immediate successor of $v$ on the outer cycle.
Hence, super-blocks are processed by the consecutive method, such that we have
an interval $[B^+]$ for $B^+$ in $L(H)$.

For the later page assignment, each cluster must be visited in one sweep,
such that is appears like a block. The boundary of the blocks of a cluster
$C(b,v)$ is traversed in  ccw-order and each cutvertex is listed at its first appearance
as in a depth-first search or pre-order traversal \cite{clrs-ia-01}. Thereby,
we extend Yannakakis nested method, where   the traversed blocks
form a path.
If $w_1<\ldots <w_r$ are outer vertices and $C(b_i,w_1), \ldots,  C(b_i,w_r)$
are  the clusters at $b_i$, then insert their vertices in the order
$C(b_i,w_r), \ldots, C(b_i,w_1)$ just right of $b_i$
for $B=b_0,\ldots, b_r$.
The layout of a super-block $B^+$ is obtained by traversing its root
$B$ in ccw-order, placing the vertices of $V(B)$ just
right of the dominator, and inserting the  vertices of  the clusters
at each $b_i$ just right of $b_i$ for each $b_i \in V(B)$.
Then $[b_i, b_{i+1}]$ contains exactly the
intervals $[C(b_i,w_r)], \ldots, [C(b_i,w_1)]$, where $b_{q+1}$ is the vertex
succeeding $b_q$ in $L(H)$. In consequence, the last vertex $b_q$ of block $B$
is no  longer the last vertex of $B^+$, since the clusters at $b_q$
are placed to the right of $b_q$. A cluster $C(b,w)$ is encapsulated in the sense
that all edges with a vertex  in $C(b,w)$ have the other vertex in
$C(b,w) \cup \{b,w\}$. Thus a placement to the right of the block
does not matter, as we will see.

There is a special case of an outer  separation pair $\seppair{u}{v}$
or a multi-edge $\edge{u}{v}$ with $u<v$.
Each inner component is a block-tree, where block-trees are
separated by a copy of the multi-edge $\edge{u}{v}$.
Each inner component has uncovered blocks $B_1,\ldots, B_r$
with dominator $u$  forming a backbone by Lemma~\ref{lem:backbone},
and blocks covered by $u$ and $v$, respectively.
A path of super-blocks $B_1^+,\ldots,B^+_r$ remains after
block-expansions. In the later book embedding, we will need
pages $P_2$ and $P_3$ alternatingly for the inner edges of two
consecutive super-blocks.
In this particular case, one may merge all inner components
into a single super-block.
In any case, there is an interval just right of $u$ containing
exactly all vertices from the inner components at $\seppair{u}{v}$.
Each vertex from an  inner component  has binding edges incident
only to $u$ or $v$, where the ones incident to $u$ are backward binding
and those incident to $v$ are forward binding.

\begin{corollary} \label{cor:chord-no-left}   % cor 3.1
Suppose there is a partition of $H$ induced by a  super-block $B^+$ with
$v=\omega(B)$ or an outer (edge)
 chord $\edge{u}{v}$, as described in Lemma~\ref{lem:partition}.
Then there are no vertices of part $H_2$ to the right of vertex $v$.
\end{corollary}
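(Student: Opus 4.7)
The plan is to combine the explicit description of $H_2$ from Lemma~\ref{lem:partition} with the vertex-ordering procedure of Section~\ref{sect:vertexordering} and the non-domination statement of Lemma~\ref{lem:no-span}.

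First I would enumerate the vertices of $H_2$ in each of the two cases. For the super-block case, set $v_i=\alpha(B)$ and $v_k=v=\omega(B)$; Lemma~\ref{lem:partition}(iv) then lists $H_2$ as the outer vertices $v_{i+1},\dots,v_{k-1}$ together with the vertices of: (a) block $B$ and the blocks dominated by $v_i$ that succeed $B$ in ccw-order, (b) the blocks dominated by $v_j$ for $i<j<k$, and (c) the blocks dominated by $v_k$ whose path to the root of the block-tree passes through $\lambda(B)$. The outer-chord case is completely analogous via Lemma~\ref{lem:partition}(v).

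Next I would check class by class that every vertex of $H_2$ lies strictly to the left of $v=v_k$ in the layout $L(H)$. The outer vertices $v_{i+1},\dots,v_{k-1}$ precede $v_k$ on the outer cycle by construction. For classes (a) and (b), each super-block is inserted just to the right of its dominator $v_j$ with $j<k$ and still to the left of the next outer vertex, so its vertices are placed before $v_k$. The subtle class is (c): a priori its blocks would be placed to the right of $v_k$, which is exactly what the corollary forbids.

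The key observation — which is where the main work happens — is that class~(c) is empty after block-expansions. Each block of class~(c) is covered by $v_k$, so it belongs to some cluster $C(a,v_k)$ attached to an uncovered block which, by the separating curve $\Gamma$ constructed in the proof of Lemma~\ref{lem:partition}(iv), still lies in $H_2$. Invoking Lemma~\ref{lem:no-span}, the vertex $v_k$ dominates no super-block of $H_2$; hence every block of class~(c) has been absorbed into a super-block whose dominator is strictly less than $v_k$, and so its vertices are already accounted for by classes (a) or (b). Therefore no vertex of $H_2$ is placed to the right of $v$, and the outer-chord case follows by the same argument using Lemma~\ref{lem:partition}(v) and the second part of Lemma~\ref{lem:no-span}.
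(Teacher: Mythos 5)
Your proof is correct and follows essentially the same route as the paper: the paper's own two-line argument also rests entirely on Lemma~\ref{lem:no-span} (vertex $v$ dominates no super-block of $H_2$) combined with the placement rule of the vertex ordering. Your class-by-class enumeration merely spells out the ``by our vertex ordering'' step that the paper leaves implicit, correctly identifying class~(c) as the only nontrivial case.
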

\begin{proof}
Vertex $v$ does not dominate any super-block
of part $H_2$ by Lemma~\ref{lem:no-span}. Hence, there are
no vertices of $H_2$ to the right of $v$ by our vertex ordering.
\end{proof}

Corollary~\ref{cor:chord-no-left} describes a benefit from the
use of super-blocks, namely, that the last outer vertex   of  a
super-block, an outer edge
or an outer chord is a strict right boundary  for the vertices in  part $H_2$ from a
partition of $H$. This will be important as several places, for example in
Lemma~\ref{lem:x1}.
It also holds for  the last outer vertex $\omega(\mathcal{T})$
of a  block tree $\mathcal{T}$, since $\omega(\mathcal{T}) \geq \omega(B)$
for any (super-) block of $\mathcal{T}$.
In particular, the last outer vertex $\omega(\mathcal{T})$
may dominate and thus cover blocks of  $\ \mathcal{T}$.
These blocks   are treated as
the   special case of a ``small face'' in
the approach by Bekos et al.~\cite{bdggmr-framed-24}.
Here they change the common rule
and  place the vertices from the covered blocks immediately to the left
of $\omega(\mathcal{T})$, which is problematic \cite{bdggmr-benpsf-20}
and costs two extra pages  \cite{bdggmr-framed-24}.

We use   the consecutive method for  super-blocks, the nested method
 for clusters, and block-expansions for super-blocks.
This combination  is superior to a single method.
A  super-block $B^+$  is
treated like a block in other approaches \cite{bbkr-book1p-17,bdggmr-framed-24, y-epg4p-89}, since  its  vertices form
 an interval $[V(B^+)]$ just right of its dominator
in the vertex ordering. More importantly,
blocks in a cluster do not have purple edges, as opposed to
a use of the consecutive method.  On the other hand, the nested
method   for super-blocks with the same dominator
does not allow  to save one page for purple edges at the composition,
as pointed out after Theorem~\ref{thm:10-pages}.

Similar to Lemmas 1 and 2 in \cite{y-epg4p-89}, we obtain a correspondence
between a partition induced by a super-block or a outer (edge) chord
and the vertex ordering, using Corollary~\ref{cor:chord-no-left} for the
right boundary.

\begin{lemma} \label{lem:ordering}  % Lem 3.4  page 15
Let $v_i$ and $v_k$ be outer vertices
of a   normalized planar 2-level multigraph $H$,
such that $v_i$ is the dominator and $v_k$ the last outer vertex
of a  super-block $B^+$ or $\edge{v_i}{v_k}$ is an outer   chord or an outer edge.
Let $H$
partition into parts $H_1$ and $H_2$  as described in Lemma~\ref{lem:partition},
and let $L$ be the set of vertices from super-blocks in $H_1$ that are dominated by $v_i$.
Then the interval
 $[v_i, v_k]$   contains  exactly the vertices of $L$ and  of $H_2$ with  $L < V(H_2)$,
that is we have $v_i, L,V(H_2),v_k$ in the vertex ordering.
\end{lemma}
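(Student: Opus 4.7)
The plan is to read off the interval $[v_i,v_k]$ directly from the construction of the vertex ordering and to exclude vertices outside the partition using Lemma~\ref{lem:partition} together with Corollary~\ref{cor:chord-no-left}. First I would identify the vertices that the ordering rule places inside $[v_i,v_k]$: the inner vertices of super-blocks dominated by $v_i$ (placed in ccw-order immediately after $v_i$, or immediately to the left of $v_1$ in the special case $v_i=v_0$); the outer vertices $v_{i+1},\ldots,v_{k-1}$ themselves; and, for each such $v_j$, the inner vertices of all super-blocks dominated by $v_j$, inserted right after $v_j$. The right boundary is then immediate: by Corollary~\ref{cor:chord-no-left} no vertex of $H_2$ can sit to the right of $v_k$, and the vertices of $L$ all lie between $v_i$ and $v_{i+1}$, so they too are strictly left of $v_k$.

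Next I would establish the internal order $L<V(H_2)$. The super-blocks contributing to $L$ are, by definition, those in $H_1$ that are dominated by $v_i$; by Lemma~\ref{lem:partition}(iv)--(v) these are precisely the super-blocks that precede $B^+$ in ccw-order at $v_i$, while $B^+$ itself and every other super-block dominated by $v_i$ whose interior meets $H_2$ succeeds them. Since vertices are emitted in ccw-order at each dominator, the vertices of $L$ appear contiguously just right of $v_i$, then the inner vertices of the remaining super-blocks dominated by $v_i$ (all lying in $H_2$) follow, then $v_{i+1}$, and so on up to $v_{k-1}$ with its super-blocks, all of which lie in $H_2$ by Lemma~\ref{lem:partition}.

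Finally I would verify completeness: no vertex outside $L\cup V(H_2)$ lands in $[v_i,v_k]$. Any outer vertex strictly between $v_i$ and $v_k$ is some $v_j$ with $i<j<k$ and hence belongs to $V(H_2)$; any inner vertex there lies in a unique super-block whose dominator is some $v_j$ with $i\leq j<k$, and Lemma~\ref{lem:partition} places every such super-block either in $L$ (when $j=i$ and it is in $H_1$) or in $H_2$. The delicate point is the blocks dominated by $v_k$ that would, by the ordering rule, be placed just right of $v_k$: the ones belonging to $H_2$ are covered by $v_k$ along a path through $\lambda(B)$, and by Lemma~\ref{lem:partition}(iii) together with the definition of a super-block they have been absorbed into clusters $C(\cdot,v_k)$ of uncovered super-blocks already sitting inside $H_2$. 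I expect this absorption step---showing that block-expansion really prevents any $H_2$-vertex from spilling past $v_k$---to be the main obstacle, since this is precisely where the argument departs from the classical nested method of Yannakakis.
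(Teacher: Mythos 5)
Your proposal is correct and follows essentially the same route as the paper's argument: it reads the interval off the construction of $L(H)$, uses Lemma~\ref{lem:partition} for membership in $H_1$ versus $H_2$, invokes the no-span property (Lemma~\ref{lem:no-span}, via Corollary~\ref{cor:chord-no-left}) for the right boundary at $v_k$, and handles the blocks dominated by $v_k$ by observing that those in $H_2$ are covered and hence absorbed into super-blocks by block-expansion. The ``delicate point'' you flag at the end is exactly the content of Lemma~\ref{lem:no-span}, which the paper has already established, so no further work is needed there.
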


% kann man sich schenken

Consider super-block $B^+$ with $\alpha(B^+)=v_i$ and $\omega(B^+)=v_k$
or an outer chord $\edge{v_i}{v_k}$.
If edge $e$ spans vertex $v_i$  and not $v_k$, then
$e$ is the first or last edge of the least super-block
$A^+$ that is  dominated by $v_i$ or $e$ crosses $\edge{v_i}{v_k}$.
 If edge $e$ spans vertex $v_k$  and not $v_i$,
then $e$ is incident to a vertex in $L$, since $v_k$ is a
boundary for edges with one vertex in $H_2$, as shown
in Lemma~\ref{lem:no-span}.
Now $e$   is an inner or a forward binding edge, since outer  chords and
forward binding edges do not span exactly one of $v_i$ and $v_k$.

  Next we consider also crossed edges.

\begin{definition}  \label{def:interfere}
Let $H^{\times}$ be   a normalized   1-planar 2-level multigraph.
Let $H_1$ and $H_2$ be parts of a decomposition of
the planar skeleton $H=H^{\times}[E_b]$ induced by
(the root of)    super-block $B^+$
or an outer (edge) chord $\edge{v_i}{v_k}$, as described in Lemma~\ref{lem:partition},
with  $v_i=\alpha(B^+)$ and $v_k=\omega(B^+)$.
Any two (uncrossed or crossed) edges $e_1$ and $e_2$  of $H^{\times}$
are \emph{well-separated}  if
  both vertices of $e_2$ are in $H_2  \cup \{v_i, v_k\}$ and both
vertices of $e_1$ are in $H_1 \cup \{v_i, v_k\} -L$,
where $L$ is the set of vertices from super-blocks dominated by $v_i$ in $H_1$.

 Otherwise, $e_1$  and $e_2$  \emph{may interfere}.
\end{definition}

\begin{lemma} \label{lem:interfere}  %Lem3.6  page 15
Any two well-separated  edges can be embedded in the same page.
\end{lemma}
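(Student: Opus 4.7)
The plan is to show that well-separated edges cannot twist, from which embeddability in a common page is immediate by definition of a book embedding. The key tool is Lemma~\ref{lem:ordering}, which pins down the contents of the interval $[v_i,v_k]$ in the layout $L(H)$, together with a case analysis on where the endpoints of $e_1$ sit relative to that interval.

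First I would recall that by Lemma~\ref{lem:ordering} the interval $[v_i,v_k]$ contains exactly the vertices of $\{v_i,v_k\}\cup L\cup V(H_2)$, with $L$ placed immediately after $v_i$ and $V(H_2)$ to the right of $L$ but before $v_k$. From this, the two endpoints of $e_2$, being in $H_2\cup\{v_i,v_k\}$, both lie in $[v_i,v_k]$. Dually, because the endpoints of $e_1$ lie in $(H_1\cup\{v_i,v_k\})\setminus L$ and the only vertices of $H_1$ that fall in $[v_i,v_k]$ are those of $L$ (by Lemma~\ref{lem:ordering} and Corollary~\ref{cor:chord-no-left} which rules out super-blocks of $H_2$ to the right of $v_k$), both endpoints of $e_1$ lie in $\{v_i,v_k\}\cup(\,\text{outside of }(v_i,v_k)\,)$.

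Next I would split into three cases for $e_1=(a,b)$ with $a<b$: (i) $b\le v_i$, so $e_1$ sits entirely left of or at $v_i$ and hence weakly left of every endpoint of $e_2$; (ii) $a\ge v_k$, the symmetric right-side case; and (iii) $a\le v_i$ and $b\ge v_k$, so $e_1$ straddles $[v_i,v_k]$ while $e_2$ lies inside it. In cases (i) and (ii) the edges are disjoint or share an endpoint at $v_i$ or $v_k$; in case (iii) $e_2$ nests inside $e_1$, possibly sharing an endpoint at $v_i$ or $v_k$. A twist requires two strict inequalities $u<x<v<y$, so a shared endpoint never causes a twist. Hence in all three cases $e_1$ and $e_2$ neither twist, and they may coexist on a single page.

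The only subtlety, which I would flag but treat as routine, is the handling of the shared boundary vertices $v_i$ and $v_k$, which by Definition~\ref{def:interfere} may legally be endpoints of either edge. The strict inequalities in the definition of ``twist'' make these coincidences harmless, so no extra argument is needed; and since the argument uses only positions in $L(H)$, it applies uniformly to uncrossed and crossed edges, matching the statement of the lemma.
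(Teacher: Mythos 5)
Your proof is correct and follows essentially the same route as the paper's: invoke Lemma~\ref{lem:ordering} to place the endpoints of $e_2$ inside $[v_i,v_k]$ and those of $e_1$ outside (or on the boundary), and conclude that the edges nest or are disjoint, hence never twist. Your extra care with the shared boundary vertices $v_i,v_k$ is a welcome refinement of the same argument, not a different approach.
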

\begin{proof}
Partition  $H$ as described in  Lemma~\ref{lem:partition}.
If $e_1$ and $e_2$ are well-separated, then
 the vertices of $e_2$ are between and including $v_i$ and $v_k$ by
Lemma~\ref{lem:ordering},
and the vertices of $e_1$ are outside the interval
$[v_i, v_k]$.
Hence, $e$ and $e'$ nest or are disjoint.
\end{proof}

Hence, the correctness of a book embedding
can be reduced to the special case where
  edges may interfere. Clearly, two edges may nest or are disjoint if they may interfere.
Obviously, edges incident to vertices of super-blocks $A^+$ and $B^+$
are well-separated in the planar case, if there is a super-block $C^+$ in between
on a path of a block-tree, and similarly for a path with super-blocks $A^+, C^+, D^+$
and $B^+$ with crossed edges in a normalized 1-planar 2-level graph, since
there is a partition induced by $C^+$.

\subsection{Edge Coloring} \label{sect:edgecoloring}

In this section, let $H^{\times}$ be a 2-level subgraph of a  normalized
1-planar multigraph with outer cycle $O=v_0,\ldots, v_t$ and
some block $B=b_0,\ldots, b_q$.
A binding edge $\edge{b_i}{u}$ is called a \emph{pre-cluster edge}
if there is a cluster  $C(b_i, v)$ for $i=1,\ldots, q$
and $u<v$, see Figure~\ref{fig:pre-cluster}. Note that $b_i$ is the
leader of the cluster and there is a multi-edge $\edge{b_i}{v}$.\\

We  use the following \emph{coloring scheme}  that is adjusted to the later
page assignment.

\begin{figure}[t]  % Fig 4
  \centering
  \subfigure[]{
    \includegraphics[scale=0.47]{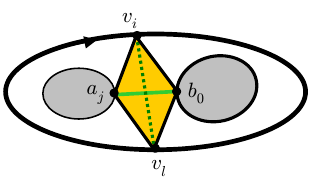}   % Fig-7page1-planar
     \label{fig:last-1}
    }
 \subfigure[]{
    \includegraphics[scale=0.47]{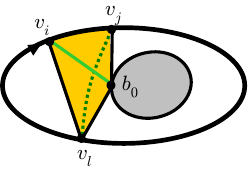}
\label{fig:last-2}
    }
\subfigure[]{
   \includegraphics[scale=0.47]{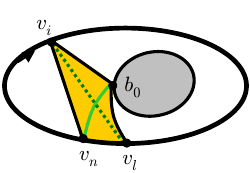}
\label{fig:last-3}
    }
\subfigure[]{
    \includegraphics[scale=0.47]{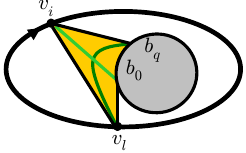}
\label{fig:last-5}
    }
\subfigure[]{
    \includegraphics[scale=0.47]{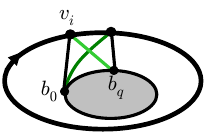}
\label{fig:last-4}
    }
\caption{Green crossed edges.
A quadrangle for the first face of a block-tree with first outer vertex
$v_i$ and last outer vertex $v_l$ is colored yellow.
A crossed outer chord is drawn dotted in
(a)-(c)  and a forward and a backward binding edge cross in
a quadrangle with the last edge of block $B$ in (d)-(e).
The first face is shaded.
  }
  \label{fig:last}
\end{figure}

(1) All edges of the planar skeleton are colored black,
except if edge $e$ is a  pre-cluster edge,
in which case $e$ is colored red.  % rot

All other edges
cross in the interior of a quadrangle $Q$ with one to four outer vertices,
such that $g$ crosses $h$. It may be   the first face for a block-tree.

(2) If $Q$
has four outer vertices, then two outer chord cross. Now the outer chord
incident to the least outer vertex is colored green and red the other,
see Figure~\ref{fig:levelgraph}.
In all other cases, a crossed outer chord is colored green.

(3) If $Q$ has three outer vertices, then
   the  outer  chord $g$ crosses a binding edge $h$.
 Now both edges are colored green, see Figures~\ref{fig:last-2}
and \ref{fig:last-3}.

(4)  If $Q$ has two inner vertices, then (a) either an outer chord
crosses an inner chord that connects two   block-trees,
 see Figure~\ref{fig:last-1}. Now both edges are colored green.
Or (b) two binding $g$ and $h$ edges cross. Now both edges
are colored green if $Q$ contains the last edge of a block,
see Figures~\ref{fig:last-5} and \ref{fig:last-4}. % auch hook
Otherwise, (c)   $g$ is colored green and $h$ red if  $g \prec h$, that is $g$
 is incident to the least outer vertex of $Q$.

(5) If $Q$ has a single outer vertex, then a binding edge $g$
crosses an inner chord $h$,   where the
  vertices of $h$ are in one or two blocks of a block-tree.
Now  $g$ is colored green.
(a) Edge $h$ is   colored red   if both vertices are in the same block,
or  (b) $Q$ contains the first edge of an uncovered block,
or (c) one vertex of $h$ is in a covered block $B$ and the other vertex is
the leader of $B$, such that $h=\edge{b_0}{b_2}$ or $h=\edge{b_0}{b_{q-1}}$.
(d)  Edge $h$ is colored purple, as shown in Figure~\ref{fig:badedges},
 if its vertices are in two uncovered blocks and $Q$
contains the last edge of a block, where the (multi-) edge of a
 block with two vertices is its last edge, see Figure~\ref{fig:path}.\\

\begin{figure}[t]   % Fig 7
  \centering
  \subfigure[]{
    \includegraphics[scale=0.6]{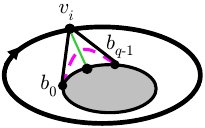}   %  Fig-10pages1-planar
    }
 \subfigure[]{
    \includegraphics[scale=0.6]{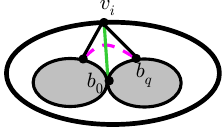}
    }
\subfigure[]{
    \includegraphics[scale=0.6]{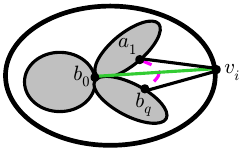}
    }
\caption{ A purple edge assigned to block $B$, called  a
(a)   handle  (b)
 a connector and  (c)  a  bridge,  respectively.
  }
  \label{fig:badedges}
\end{figure}

\begin{figure}[t]
\centering
\includegraphics[width=0.4\textwidth]{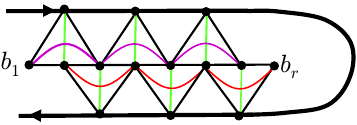} %Fig-107pages
  \caption{A path of blocks with inner chords on either side.
  }
  \label{fig:path}
\end{figure}

This covers all cases, since $H^{\times}$ is normalized.
Recall that there are no inner chords with vertices in two blocks, such as
a connector or a bridge,
if the blocks are  covered and are in a cluster.
Also all outer chords are  black or green, except if two outer
chords cross, and similarly for the binding edges.
where  two green edges may cross.
On the other hand, all inner chords are red or purple.

\subsection{Embedding  Black and Green  Edges}\label{subsect:edgeassignment}

We extend Yannakakis~\cite{y-epg4p-89} algorithm
 from blocks to super-blocks. There is the same vertex ordering and page
assignment for all uncrossed edges if there are no covered blocks.
As observed before, the vertices on the outer cycle $O$ are consecutive
 if $O$ is traversed by the consecutive method at
the previous level, except for its leader.
They are distributed over intervals if the
nested method was used, for example if $O$ is in a cluster.
As observed in \cite{y-epg4p-89}, this
does not matter for the correctness of
the book embedding, since there are no edges between vertices
from the interior of $O$ and $O\rq{}$ if $O$ and $O\rq{}$ are
blocks at the previous level, and the intervals for the vertices of
 $O$ and $O\rq{}$ are either internally disjoint or they nest.

\begin{lemma} \label{lem:planarpages}  % Lem 3.6  page 18
The black and green edges of a normalized   planar 2-level multigraph
can be  embedded in three pages $\eta(B^+), \pi(B^+)$ and
$\overline{\pi}(B^+)$  (or $P_1, P_2, P_3$)
if the vertex ordering $L(H)$ is used,
such that the edges of any super-block $B^+$  are embedded in page
$\pi(B^+)$,  the forward binding edges incident to vertices of $B^+$
in page $\overline{\pi}(B^+)$, and all outer edges, outer
chords and backward binding edges
in page  $\eta(B^+)$, where $\eta(B^+)=\pi(O)$ for the outer cycle $O$.
In addition,  let $\eta(B^+) = \eta(A^+)$ for any two super-blocks and let
$\pi(B^+)=\overline{\pi}(A^+)$ if both are in the same expanded block-tree and
the distance between the root and $A^+$ is even and is odd for $B^+$,
or vice versa, and let $\pi(A^+)=\pi(B^+)$, otherwise.
\end{lemma}

\begin{proof}
Yannakakis  \cite{y-epg4p-89} has shown that there is no conflict in
any page by his page assignment of edges  if there is only one
block-tree and   there are no covered blocks. We use the same
vertex ordering and page assignment when restricted to vertices in uncovered
blocks and use the arguments for the nested method for inner edges and page
$\pi(B^+)$.
The extension to planar 2-level multigraphs with outer chords
and covered blocks is proved in the  same way,
using Lemmas~\ref{lem:partition} and \ref{lem:ordering} and induction on the number
of super-blocks and  outer chords.
Observe that one edge is embedded in page $\eta(B^+)$ and the other in
$\pi(B^+)$ or $\overline{\pi}(B^+)$ if two green edges cross,
as defined before in cases (3) and (4)(a) and (4)(b). Note that
any two such edges twist.

Concerning block-expansions, observe that the vertices of a cluster
$C(b_i,v)$ are placed in an interval just right of $b_i$.
Then all blocks of $C(b_i,v)$ are covered by $v$. All
inner edges are embedded in page $\pi(B)$, as in the nested method.
All binding edges
incident to a vertex of $C(b_i,v)$ are incident to $v$.
They form a fan at $v$ and do not twist mutually.  For any vertex
$c \in V(C(b_i,v))$, edge $\edge{c}{v}$ is backward
binding if $v$ also dominates the block of $b_i$, such as
vertex $v_k$ in Figure~\ref{fig:pre-cluster}. Then $\edge{b_i}{v}$
is a backward binding (multi-) edge. Now, all edges $\edge{b_j}{v}$
with $i\leq j \leq q$ are backward binding.
Edges $\edge{b_j}{v}$ and $\edge{c}{v}$ do not twist, since
$c$ is placed just right of $b_i$. Hence,
  all edges $\edge{c}{v}$ can also be embedded in page $\eta(B)$.
Similarly, if the block of $b_i$ is not dominated by $v$, such as
vertex $v_n$   in Figure~\ref{fig:pre-cluster}, then
edge $\edge{b_i}{v}$ is forward binding and is embedded in page
$\overline{\pi}(B)$. Now all edges  $\edge{c}{v}$ are forward binding
and are embedded in $\overline{\pi}(B)$. However, any pre-cluster edge
$\edge{b_i}{v\rq{}}$ twists any binding edge $\edge{c}{v}$. As
pre-cluster edges are colored red, they are embedded in page $\rho(B)$,
as shown in Lemma~\ref{lem:x1}.
\iffalse %  geaendert
All binding edges incident to vertices $c \in V(b_i,v)$ are colored red.
In contrast, all inner chords with edges in $V(b_i,v)$ can be embedded in
page $\overline{\pi}(B^+)$ if the inner edges of $B^+$ are embedded
in page $\pi(B^+)$. By  the normalization, both vertices  of an inner chord
are in the same block if an inner vertex is in a covered block.
Then the vertices are $b_i$ and $b_{i+2}$ with $b_0=b_{q+1}$,
and vertex $b_{i+1}$ is incident only to $b_i, b_{i+2}$ and the outer
vertex $v$.
\fi
\end{proof}

\subsection{Embedding Red and Purple Edges} \label{subsect:assignerededges} %Sect3.4

Yannakakis~\cite{y-epg4p-89} has observed that all forward binding edges
of a planar 2-level graph can be embedded in a single page.
Similarly,  the red edges of a normalized 2-level 1-planar
graph $H^{\times}$  can be embedded in a single page $\rho(B)$,
including all pre-cluster edges.
Moreover,  the last edges of all blocks are embedded in two pages that alternate
between a block and its parent, as stated in Lemma~\ref{lem:planarpages}.
Similarly, the purple edges can be embedded in pages
$\chi(B)$  and $\overline{\chi}(B)$   (or $P_5, P_6$), where $\chi(B)$ and
$\overline{\chi}(B)$ are opposite to each other, similar to $\pi(B)$
and $\overline{\pi}(B)$.
Now, covered  blocks are simpler than
uncovered ones, since  they have no incident purple edges.

First, observe that there is no red or purple edge with
vertices in two block-trees. Any two block-trees are separated by an uncrossed
outer chord or by a quadrangle that contains
an outer chord that is crossed by a green inner chord.
 By 1-planarity, a connection by  a red or purple edge is excluded.
Second, if block $B$ is covered, then it is in a cluster $C(a,v)$. Then
all binding edges incident to vertices of $B$ are incident to $v$
such that all  red binding edges    form a fan at $v$.  Similarly, all pre-cluster
edges form a fan at the leader $a$ of the cluster:

\begin{lemma} \label{lem:x1}  %Lemma 3.10
All   red    edges
of a normalized 1-planar 2-level  graph $H^{\times}$ can be embedded in a single page.
\end{lemma}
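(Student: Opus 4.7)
The plan is to show that no two red edges twist in $L(H)$, so they can all be routed on a single page $\rho$. The red edges fall into four classes according to the coloring scheme: (i) pre-cluster binding edges from rule~(1); (ii) red outer chords from rule~(2); (iii) red binding edges from case~(4c); and (iv) red inner chords of an uncovered block from cases~(5a)--(5c). Inner chords of a covered block sitting in a cluster $C(b_i,v)$ are incident only to $b_i$, $v$ and the triangulating chords inside the interval just right of $b_i$, so covered blocks contribute only locally and cannot interfere with red edges of other super-blocks.

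I will proceed in two stages. First, I handle red edges assigned to different super-blocks. If red edges $e_1$ and $e_2$ live in super-blocks $A^+$ and $B^+$ from different block-trees, they are well-separated by the outer chord or X-quadrangle separating the block-trees, so by Lemma~\ref{lem:interfere} they do not twist. If $A^+$ and $B^+$ are in the same expanded block-tree but neither is an ancestor of the other, then Lemma~\ref{lem:partition}(iv) induces a partition that witnesses well-separation, and Lemma~\ref{lem:interfere} again applies. In the parent--descendant case, Lemma~\ref{lem:ordering} and Corollary~\ref{cor:chord-no-left} show that $[V(B^+)]\subseteq [\alpha(B^+),\omega(B^+)]$ lies inside the analogous interval of the parent, so the red edges assigned to the descendant nest inside the interval containing the red edges of the ancestor.

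Second, I verify compatibility within a single super-block $B^+$ with root $B=b_0,\ldots,b_q$. The red inner chords of $B$ are diagonals of the triangulated polygon on the interval $[V(B)]$, and hence nest or are disjoint in the induced linear order. A red binding edge $h$ from case~(4c) is the binding edge incident to the greater outer vertex of a quadrangle $Q$ that does \emph{not} contain the last edge of $B$; its inner endpoint lies in $V(B)$ and, walking along $B$ in ccw-order, the sequence of such quadrangles induces a nested sequence of binding edges. A pre-cluster edge $\edge{b_i}{u}$ satisfies $u<v$ for the multi-edge $\edge{b_i}{v}$ wrapping $C(b_i,v)$, and since $C(b_i,v)$ is placed immediately right of $b_i$, the edge $\edge{b_i}{u}$ spans exactly the outer vertices in $(u,b_i)$; multiple pre-cluster edges at the same $b_i$ form a fan and hence pairwise nest. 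The red outer chords from rule~(2) involve only outer vertices and, by rule~(2) itself, do not twist the green ``partner'' of any $X$; two red outer chords from distinct $X$s cannot twist by 1-planarity of the outer chords.

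The main obstacle will be the mixed comparison between a pre-cluster edge at some $b_i$ and a red inner chord or red binding edge incident to a different $b_j$ in the same block $B$. The key observation that resolves this is that the uncrossed copies of the multi-edge $\edge{b_i}{v}$ bounding $C(b_i,v)$ separate the drawing locally around $b_i$, so any red edge $\edge{b_i}{u}$ with $u<v$ is drawn inside the face at $b_i$ to the left of this copy; combined with the placement of $C(b_i,v)$ immediately right of $b_i$ and of $V(B)$ as a contiguous interval, a case analysis on the position of $u$ relative to $\alpha(B),\omega(B)$ and the other $b_j$ shows that $\edge{b_i}{u}$ is nested with every other red edge of $B^+$. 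Chaining this with the inter-super-block analysis then completes the proof.
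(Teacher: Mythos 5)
Your overall strategy (classify the red edges, dispose of inter-super-block pairs via Lemma~\ref{lem:interfere} and the partition lemmas, then check pairs inside one super-block) follows the same skeleton as the paper, and your treatment of pre-cluster fans, red outer chords, and the case~(4c) binding edges is in the right spirit. However, there is a genuine gap in your classification of the red inner chords, and it hides exactly the case on which the paper's proof spends most of its effort. You describe class~(iv) as ``red inner chords of an uncovered block'' and argue that they are ``diagonals of the triangulated polygon on the interval $[V(B)]$,'' hence pairwise nested. That covers case~(5a) (both endpoints in the same block) and the local chords of case~(5c) inside a cluster, but case~(5b) colors red an inner chord lying in a quadrangle that contains the \emph{first} edge of an uncovered block $B$; such a chord has its endpoints in \emph{two} different uncovered blocks, e.g.\ $\edge{a_j}{b_2}$ or $\edge{a_{j-1}}{b_1}$, where $a_j=b_0$ is the cutvertex between the parent $A$ and its child $B$ (and $a_0$ may even lie in $A$'s own parent $D$ when $j=1$). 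In the vertex ordering this edge is not a diagonal of any single block's interval: it jumps over the remaining vertices of $A$, the clusters hanging off the intermediate $a_i$, and the outer vertices up to and including $\alpha(B)$, before ending at $b_1\in V(B)$. Your parent--descendant nesting argument does not apply to it either, because it straddles the boundary between the two super-block intervals rather than lying inside one of them.

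To close the gap you would need the argument the paper gives around Figure~\ref{fig:redchords}: show that $\alpha(A)$ and $\alpha(B)$ dominate no super-blocks other than $A^+$ and $B^+$ (so the sets $L$ of Lemma~\ref{lem:ordering} are empty there), derive the explicit ordering $a_0, C_1, \ldots, \alpha(A), a_1, C_2, \ldots, \alpha(B), b_1, b_2$, and conclude that every red edge with a vertex strictly between $a_0$ and $b_1$ has \emph{both} vertices in that range, so that $\edge{a_0}{b_1}$ spans it rather than twisting it. A secondary, smaller omission: for red outer chords you only rule out twists with other red outer chords; you still need to check them against red edges having an inner endpoint (in particular pre-cluster and binding edges of blocks dominated by $u_1$, the set $L$ in the paper's notation), which the paper does via the four-part partition $H_1, H_{12}, H_{23}, H_{34}$ induced by the uncrossed sides of the X-quadrangle containing the crossing pair of outer chords.
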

\begin{proof}
An edge is red if it is (1)  a pre-cluster edge or (2)
  an outer chord  that is crossed by another outer chord,  or (4) a binding edge
that   is %%incident to a vertex in a covered block in case (1) or is
 crossed by a binding edge in a quadrangle without the last edge
of an uncovered   block % in case  4(c)
 or  (5) an inner chord, except if it is  in a quadrangle $Q$ with
 the last edge of an uncovered   block, in which case it is purple
We proceed by induction and consider super-blocks
in the order of their  dominator.
 By Lemma~\ref{lem:interfere}, it suffices to consider
red  edges that may interfere, which is not used in the proof.

\begin{figure}[t]   %Fig7
  \centering
  \subfigure[]{
    \includegraphics[scale=0.9]{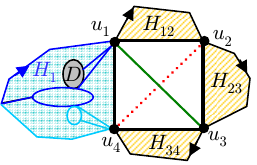}  % book1planarFig3.ppt
    \label{fig:cross-outer}
}
\hspace{4mm}
\subfigure[]{
    \includegraphics[scale=0.9]{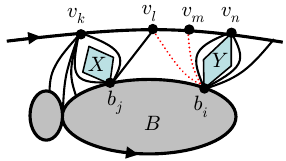}  % book1planarFig3.ppt
    \label{fig:pre-cluster}
}
\caption{Illustration for the proof of Lemma~\ref{lem:x1}.
(a) A pair of crossed outer chords and the induced partition of $H$.
(b) Block $B$ is dominated by vertex $v_k$ and is expanded by clusters
clusters $X=C(b_j, v_k)$
and $Y=C(b_i, v_n)$ and pre-cluster edges $\edge{b_i}{v_l},
\ldots, \edge{b_i}{v_m}$.
  }
  \label{fig:pre}
\end{figure}

First, we show that red outer chords can be embedded in single page, denoted
$\rho(B)$ (or $P_4$), and that a pair of crossed outer chords
induces a  partition of $H$ and of $L(H)$.
A red outer chord $e = \edge{u_2}{u_4}$
is crossed by a green outer chord $\edge{u_1}{u_3}$
such that $\edge{u_i}{i_{i+1}}$ is an uncrossed outer chord or an outer edge
by the normalization for $i=1,2,3,4$ and $u_5=u_1$, see Figure~\ref{fig:cross-outer}.
It may be a multi-edge, such that $\edge{u_i}{i_{i+1}}$ is a separation pair.
Each of them induces a partition of $H$ by  Lemma~\ref{lem:partition}.
Denote the second part by $H_{i,i+1}$ for
$i=1,2,3,4$, where index computations are modulo four,
and let $H_1$ denote the first part for $\edge{u_1}{u_4}$.
The second part  contains all inner components
if $\seppair{u_i}{u_{i+1}}$ is a separation pair.
Vertex $u_1$ may dominate  super-blocks in $H_1$,
whose vertices are   placed just right of $u_1$.
Vertex $u_{2}$ does not dominate any  super-block  in part  $H_{12}$
if $H$ is partitioned by $\edge{u_1}{u_{2}}$ by Corollary~\ref{cor:chord-no-left}.
 Hence, there are no
vertices in $H_{12}$   to the right of $u_2$, and similarly
 for $u_3$ and $u_4$. In consequence, exactly the vertices of $H_{23}$
are between $u_2$ and $u_3$ by Lemma~\ref{lem:ordering},
and exactly the vertices of $H_{34}$ are between $u_3$ and $u_4$.
Thus the set of vertices $V$ satisfies
$V_l < u_1 < L < V(H_{12}) <u_2 < V(H_{23}) < u_3
< V(H_{34}) < u_4 < V_r$,
  where $V(H_1)=V_l \cup L \cup V_r$, and $L$ is the set of vertices
from super-blocks in $H_1$ that are dominated by $u_1$.

There is no edge between vertices of the first and the second part from
a partition induced by any of $\edge{u_i}{u_{i+1}}$, since it is uncrossed.
Hence, there is no edge between any two of $H_1,  H_{12},
 H_{23}$ and $H_{34}$.
There is no edge between $u_1$ and a vertex of $H_{23}$ or $H_{34}$,
since $\edge{u_2}{u_3}$  and $\edge{u_3}{u_4}$  are uncrossed outer  chords
Similarly, there is no edge between $u_3$ and $H_1$.
However, edges  $\edge{u_1}{u_3}$ and $\edge{u_2}{u_4}$ cross in the
1-planar drawing and twist in $L(H)$.
Hence, we obtain a partition of $H$ and its linear ordering $L(H)$, such that
any edge with one vertex in $H_{12}$ has both vertices in $H_{12}\cup \{u_1,u_2\}$,
and similarly for $H_{23}$  and $H_{34}$. In particular, these edges
do not twist the red outer chord $\edge{u_2}{u_4}$. Hence, only edges with a vertex
in $H_1$ may cause, trouble, in particular, those with a vertex in $L$.
However, if edge $e$ has one vertex on $H_1$,
then its other vertex is in $H_1 \cup \{u_1,u_4\}$, that that
$e$ and $\edge{u_2}{u_4}$ are disjoint or nest. Hence, the
red outer chord $\edge{u_2}{u_4}$ can be embedded  in page $\rho(B)$,
and also in $\pi(B)$ and $\overline{\pi}(B)$, where an embedding
in $\pi(B)$ has side effects at the next level,
 when   $B$ is the outer cycle
and may have an outer chord incident to $b_{i+1}$ in its interior.
Observe that  its crossing green outer chord $\edge{u_1}{u_3}$
can only be embedded in the page for the outer edges  $\eta(B)$.
By  induction,  all red  outer chords can be embedded in a single page.

Forthcoming, it suffices to consider red  edges with inner vertices
in a single block-tree, since there is a partition, as observed before.
In addition,  a  red inner  chord $e$ with both vertices in the same
uncovered block cannot twist any other red  edge, similar to the
case of green inner chords with vertices in a covered block.
Namely,  $e=\edge{b_i}{b_{i+2}}$ for $B=b_0,\ldots,b_q$ and $1\leq i \leq q-2$,
such that  $b_{i+1}$ is not a cutvertex,
as for block $A$ in Figure~\ref{fig:levelgraph}.
 Then the edges incident to
$b_{i+1}$  is colored green, such that edge $e$ can be embedded  in
$\rho(B)$.
%%%  REDUNDANT
%Hence, we can specialize and assume that a red  edge is a binding edge
%incident to a vertex of an uncovered block or an inner chord with
%  vertices  in different uncovered blocks $A$ and $B$,
%or a binding edge  incident to vertices of a covered block.

Observe that the red edges incident to vertices of a single block-tree
are ordered by the faces that contain them, similar to
forward binding edges in the planar case \cite{y-epg4p-89}.
Using Lemmas~\ref{lem:partition} and \ref{lem:ordering}, or the
arguments by Yannakakis~\cite{y-epg4p-89}, we obtain that any two red
binding edges incident to vertices of uncovered blocks
do no cross. It is due to the fact that blocks and the outer cycle
are traversed in opposite directions. Next, all pre-cluster edges
can be added to page $\rho(B)$.
To see this, consider all clusters $C(b_i, v)$ at $b_i$
with different vertices $v$. The  vertices of each cluster
from   the interval  in the interior of $[b_i, b_{i+1}]$.
The pre-cluster edges incident to $b_i$ and outer
vertices from the clusters  form a fan at $b_i$, such that
they do not twist mutually. As
the binding edges $\edge{c}{v}$
are colored green and are already embedded in page $\overline{\pi}(B)$,
any pre-cluster edge $\edge{b_i}{v}$ can be added to page $\rho(B)$,
such that it does not twist another edge in this page.

\begin{figure}[t]   % Fig 6
  \centering
  \subfigure[]{
    \includegraphics[scale=0.8]{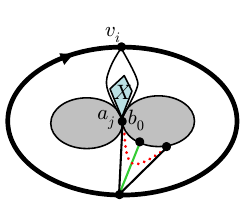}   %  Fig-10pages1-planar
    }
\hspace{4mm}
 \subfigure[]{
    \includegraphics[scale=0.8]{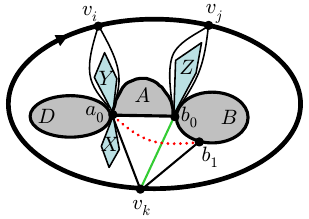}
    }
\caption{Illustration for the proof of Lemma~\ref{lem:x1}.
 A red inner chord between two blocks, drawn red dotted,
and clusters $X, Y$ at $a_0$ and $Z$ at $b_0$ with $b_0=a_1$.
Block $D$ is expanded at $a_0$ by $X$ and $Y$ and block $A$
at $a_1$ by $Z$.
  }
  \label{fig:redchords}
\end{figure}

It remains to consider red inner chords $e$ with vertices in two uncovered blocks
$A=a_0,\ldots, a_p$ and $ B=b_0,\ldots, b_q$, see Figure~\ref{fig:redchords}.
These are the root of super-blocks $A^+$ and $B^+$,
respectively. Let $A^+ \prec  B^+$.
Then $A^+$ is the parent of $B^+$ and
$B^+$ is the last child at a cutvertex $a_j$ of $A$
such that $e= \edge{a_j}{b_2}$ or $e=\edge{a_{j-1}}{b_1}$,
where  $b_0=a_j$ is the leader
of $B$ and the cutvertex between $A$ and $B$ for $j=1,\ldots, p$.
Otherwise, edge $e$ is colored purple,
since it is in a quadrangle containing the last edge of $B$.
Assume $j=1$, as the case $j >1$ is simpler.
Then vertex $a_0$ is in a block $D$ preceding  $A$, and it is not the
root of $D$. Blocks $D, A$ and $B$ form a path and there
may be clusters at $a_0$ and $a_1$.
Assume that  the dominators of $D, A$ and $B$ are different.
Otherwise, there are backward binding edges that do not disturb the
embedding of red edges.
For $D=d_0,\ldots, d_r$  and $a_0=d_h$ for $1 \leq h \leq r$ we obtain

$d_{h-1}, a_0, C_1, d_{h+1}, \ldots, d_r, \ldots, \alpha(A), a_1,C_2, a_2, \ldots, a_p,
\ldots, \alpha(B), b_1, b_2$

\noindent in the vertex ordering, where
$C_1$ and $C_2$ are the sets of vertices form all clusters $C(a_0, v)$ at $a_0$
and $C(a_1, v\rq{})$ at $a_1$, respectively.
Then $\alpha(A) \leq v\rq{} \leq  \alpha(B)$, since the quadrangle
$Q=a_0, a_1, b_1, v_s$ prevents clusters on the other side with
an outer vertex greater than $\alpha(B)$.

We claim that the red edge $\edge{a_0}{b_1}$ spans all red
edges $\edge{x}{y}$ with $a_0 < x < b_1$. The case for $\edge{b_0}{b_2}$
with $b_0=a_j$ and $j \geq 1$ is similar.
Note  that $\alpha(A)$ dominates only the super-block $A^+$
with root $A$, and similarly for $\alpha(B)$ and $B^+$.
Otherwise, if $\alpha(A)$ dominates block $X$, then $X$ is
covered by $\alpha(A)$, since $X$ is in the interior of a curve from
$\alpha(A)$ through $\lambda(X), \lambda(A)$ and $a_p$.
Then $X$ is covered by $\alpha(A)$ and is merged into another block.
In consequence, $\alpha(A)$ does not dominate blocks in part $H_1$ from the
partition induced by $A$, such that the set $L$ is empty in
Lemma~\ref{lem:ordering}, and similarly for $B$.
Hence, (the vertices of) edges incident to (vertices of) $A$ and $D$ are in part
$H_1$ from a partition induced by $B$, such that  they are outside the
interval $[\alpha(B), \omega(B)]$ by Lemma~\ref{lem:ordering}.
Clearly $b_1< \omega(B)$. In consequence, all forward binding edges
with an inner vertex   between and including $a_0$ and $\alpha(B)$ have both vertices
in the interval $[a_0, \alpha(B)]$, and similarly for inner edges. All these
edges are spanned by the red inner chord $\edge{a_0}{b_1}$, such that they
can be embedded in the same page. By induction, all red inner chords with
(or without) binding edges in two blocks can be embedded in page $\rho(B)$.

By induction on the number of block-expansion in blocks of $\mathcal{T}$, all
red  edges  edges can be embedded in page  $\rho(B)$.
\end{proof}

Note that   binding edges incident to a vertex from an inner
component at an   outer separation pair
$\seppair{u}{v}$ with $u<v$ can be recolored green, since the
  binding edges incident to $u$ can be taken as backward binding and
are embedded in page $\eta(B)$ and those incident to $v$
as forward binding with an embedding in $\overline{\pi}(B)$, or vice versa. \\

The last edges of blocks cause trouble in the planar case, as they
span all other vertices of their block and the dominator, in general.
In consequence, two pages $\pi(B)$ and $\overline{\pi}(B)$ must be used for the  embedding of the last edges from blocks $A$ and $B$ if $A$ is the parent of $B$.
A purple edge is similar, as it spans the last edge of a block
in the 1-planar drawing and the linear ordering,
see Figure~\ref{fig:all-dashed-pages}.
A purple edge is in a quadrangle $Q$ of $H{\times}[E_b]$
containing the last edge of a non-elementary,
uncovered  (super-) block $B^+$ and its dominator, as
shown in Figure~\ref{fig:badedges}. It is an inner chord
with vertices in two super-blocks $A^+$ and $B^+$ from a block-tree
or only
in one super-block if it is a root.
There are no purple edges for covered blocks.
Hence,   block-expansions can be disregarded for purple edges.
By the one to one correspondence between uncovered blocks
and super-blocks, % each super-block has at most one last edge,
there is at most one purple edge per
super-block. Similar to the first and last edge of a block,
let   $h(B^+)$ denote the \emph{purple edge} of   super-block
  $B^+$ if the last vertex of  $h(B^+)$ is in $B^+$.
There are four more cases for a quadrangle $Q$ containing the last
edge of a super-block and the leader of a root,
 as shown in Figures~\ref{fig:last}.
In these cases, an outer chord crosses   an inner chord or binding edge or
two binding edges cross, one is  backward and the other is forward binding.
These edges are colored green and are embedded in pages
$\eta(B^+)$ and $\overline{\pi}(B^+)$, respectively,
as shown in Lemma~\ref{lem:planarpages}.

In particular, if there is a path $b_1,\ldots, b_r$
of uncovered blocks $B_i=b_i, b_{i+1}$ with a (multi-) edge in between
and quadrangle with only one outer vertex in $H^{\times}[E_b]$,
then the crossed inner chords are purple on the left side
and red on the other side, see Figure~\ref{fig:path}.
 In this special case, there is no need to alternate between a block
and its parent, such that the path can be treated like
a single block. Observe that there is no path with two or
more covered blocks by the normalization.

Forthcoming, it suffices to consider non-elementary and uncovered blocks
$B = b_0,\ldots, b_q$ and $A=a_0,\ldots, a_p$ that are the root of super-blocks
$B^+$ and $A^+$, respectively.
The purple edge  assigned to $B$ is called
  a \emph{handle} if $h(B)= \edge{b_0}{ b_{q-1}}$,
a \emph{connector}
if $h(B)=\edge{b_q}{a_{j+1}}$, where $b_0=a_j$ is in  block $A$
and $a_{j+1}=a_0$ if $a_j$ is the last vertex of $A$,
a  \emph{bridge}
if $h(B)=\edge{a_1}{b_q}$ if $A$ is the sibling that precedes $B$ at
 the common cutvertex $a_0=b_0$, and
a \emph{hook} if $B$ is the root of a block-tree $\mathcal{T}$
and $h(B)= \edge{b_q}{\omega(\mathcal{T})}$, see Figure~\ref{fig:badedges}.

\begin{lemma} \label{lem:x3} % Lem3.11
All purple  edges of a normalized 2-level 1-planar graph
can be embedded in   pages  $\chi(B)$ and $\overline{\chi}(B)$, respectively.
\end{lemma}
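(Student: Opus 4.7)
The plan is to mimic, at the level of purple edges, the two-page alternation that Yannakakis uses for the last edges of blocks. The key structural observation is that, by construction, there is at most one purple edge $h(B)$ per uncovered super-block $B^+$, and that $h(B)$ lies in a quadrangle containing the last edge of the root $B$ of $B^+$. So with respect to the vertex ordering, $h(B)$ behaves like a slightly enlarged last edge of $B$ and we should be able to recycle Yannakakis's alternation strategy.

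First I would reduce to a single block-tree. Any two block-trees are separated by an uncrossed outer chord or by a quadrangle whose crossing inner chord is green, so by 1-planarity no purple edge can join two block-trees, and purple edges living in distinct expanded block-trees are well-separated in the sense of Definition~\ref{def:interfere}; Lemma~\ref{lem:interfere} then lets them share a page. It therefore suffices to fix an expanded block-tree $\mathcal{T}$ and show that its purple edges fit in two pages $\chi(B)$ and $\overline{\chi}(B)$. Assign $h(B)$ to $\chi(B)$ when the distance in $\mathcal{T}$ from the root super-block to $B^+$ is even, to $\overline{\chi}(B)$ otherwise, so that the roles of $\chi$ and $\overline{\chi}$ flip between parent and child exactly as $\pi$ and $\overline{\pi}$ did in Lemma~\ref{lem:planarpages}.

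To verify absence of conflicts in $\chi(B)$, I would take two uncovered super-blocks $A^+$ and $B^+$ with $A^+ \prec B^+$ at the same depth parity in $\mathcal{T}$ and show that $h(A)$ and $h(B)$ do not twist. If $A^+$ is an ancestor of $B^+$, then $A^+$ induces a partition of $H$ in which $B^+$ lies entirely in the second part (Lemma~\ref{lem:partition}(iv)), and Lemma~\ref{lem:ordering} places $V(B^+)$ inside the interval $[\alpha(A^+),\omega(A^+)]$. A short case check — handle $\edge{b_0}{b_{q-1}}$, connector $\edge{b_q}{a_{j+1}}$, bridge $\edge{a_1}{b_q}$, hook $\edge{b_q}{\omega(\mathcal{T})}$ — shows that the two endpoints of $h(A)$ enclose that interval, so $h(B)$ nests inside $h(A)$. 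If $A^+$ and $B^+$ are incomparable in $\mathcal{T}$, then there is a super-block $C^+$ with $A^+ \prec C^+ \prec B^+$ lying on the tree path between their least common ancestor and $B^+$; the partition induced by $C^+$, together with Corollary~\ref{cor:chord-no-left} and Lemma~\ref{lem:ordering}, places the endpoints of $h(A)$ and $h(B)$ in disjoint intervals of $L(H)$, so $h(A)$ and $h(B)$ are disjoint and cannot twist.

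The main obstacle I anticipate is handling the four shapes of purple edges uniformly in this partition-based argument, because bridges and hooks escape the interval $[\alpha(B^+),\omega(B^+)]$ on one side. For a bridge $\edge{a_1}{b_q}$ between siblings $A$ and $B$ at a common cutvertex $a_0$, and for a hook $\edge{b_q}{\omega(\mathcal{T})}$, I would argue separately that $a_0$, respectively $\omega(\mathcal{T})$, still acts as a tight boundary for purple edges of descendants of $A$ or $B$, by combining Lemma~\ref{lem:backbone} (which controls how sibling super-blocks at a common dominator are laid out in ccw-order) with Lemma~\ref{lem:no-span} (which makes the last outer vertex of $\mathcal{T}$ a strict right boundary for all its super-blocks). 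A further subtlety is that two purple edges assigned to the same super-block-depth may be a bridge in $B$ and a connector/hook in its sibling $A$; here I would use the ccw-ordering of siblings at the common cutvertex together with Lemma~\ref{lem:ordering} to place $V(A^+)$ and $V(B^+)$ in consecutive intervals inside $[\alpha,\omega]$ of their common parent, after which $h(A)$ and $h(B)$ are automatically disjoint or nested.
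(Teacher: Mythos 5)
There is a genuine gap in your page-assignment rule. You assign $h(B)$ to $\chi(B)$ or $\overline{\chi}(B)$ according to the parity of the depth of $B^+$ in the block-tree, so two sibling super-blocks always receive the same page. But the pair that forces two pages in the sibling situation is precisely a bridge: if $A=a_0,\ldots,a_p$ and $B=b_0,\ldots,b_q$ are consecutive siblings at the common cutvertex $a_0=b_0$ with $A\prec B$, then $h(B)=\edge{a_1}{b_q}$ has one endpoint in $[V(A^+)]$ and one in $[V(B^+)]$, so it straddles the two consecutive intervals. If $h(A)=\edge{x}{a_p}$ with $x$ in a block preceding $A$ (a bridge or connector of $A$), then $x<a_1<a_p<b_q$ in $L(H)$, and $h(A)$ and $h(B)$ twist although $A^+$ and $B^+$ have the same depth. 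Your final paragraph asserts that sibling purple edges are ``automatically disjoint or nested'' because $V(A^+)$ and $V(B^+)$ occupy consecutive intervals; this is exactly where the argument fails, since a bridge by definition leaves the interval of its own super-block. Note also that your ``incomparable'' case does not cover immediate siblings, because there is no intermediate super-block $C^+$ on the tree path between their least common ancestor and $B^+$, so the sibling case cannot be absorbed there either.

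The paper avoids this by not fixing the pages through depth parity. It first characterizes exactly which pairs of purple edges twist: $h(A)$ and $h(B)$ twist only when $A$ is the block containing the least vertex of $h(B)$, which happens in three shapes (parent with a handle, first child with a connector, immediately preceding sibling with a bridge); all other pairs are disjoint or nested by Lemmas~\ref{lem:partition} and \ref{lem:ordering} and the ordering of the quadrangles containing the purple edges. Since each super-block carries at most one purple edge, each $h(B)$ twists at most one earlier purple edge, so the conflict graph is a forest and is 2-colorable; the resulting assignment alternates along chains of consecutive siblings as well as along parent--child chains. Your reduction to a single block-tree and your treatment of ancestor/descendant pairs are consistent with the paper, but the sibling case needs this alternation rather than the parity rule, and without it the claimed embedding in $\chi(B)$ and $\overline{\chi}(B)$ does not go through.
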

\begin{proof}
The proof  is similar to the one
of Lemma~\ref{lem:planarpages}, where pages $\pi(B)$ and $\overline{\pi}(B)$
are used alternately, in particular, for last edges of
uncovered blocks or super-blocks that share a cutvertex.
Block $B$ is non-elementary and uncovered if it has a purple edge $h(B)$,
and it is the root of a super-block $B^+$, such that $h(B)=h(B^+)$. Hence, its
vertices are in an interval just right of the dominator.
 It does not matter
that the outer cycle may be traversed by the nested method at the
previous level, such that its vertices are contained in an interval that
also contains sub-intervals for other blocks.

We claim that purple edges $h(A)$ and $h(B)$ twist if and only if
$A$ is the parent of $B$ and $d(B)$ is a handle
or $B$ is the first child of $A$ at $b_0$ and $d(B)$ is a connector
or $A$ is the immediate predecessor of $B$ at a common cutvertex, such
that $h(B)$ is a bridge, see Figure~\ref{fig:badedges}. To see this,
observe that
blocks $A$ and $B$ are the root of super-blocks $A^+$ and $B^+$.
They are in the same block-tree if they twist, since
two block-trees are not connected by a red or purple edge,
and they are  they are separated by an outer chord.
Using Lemmas~\ref{lem:partition} and \ref{lem:ordering}, $h(A)$
and $h(B)$ are well-separated or the  vertices of $h(B)$ are  in and
those of $h(A)$ are outside an interval, otherwise. This holds, in particular
if $h(A)$ and $h(B)$ are connectors and there is a path of (super-) blocks
$A, C, B$ or if $h(A)$ and $h(B)$ are bridges and (super-) blocks
$A<C<B$ share a common cutvertex or if $h(A)$ and $h(B)$ are
handles and the leader of $A$ and $B$ is in the same block.
Hence,
there are no other cases for twisting purple edges.

The vertices of uncovered blocks are placed one after another. Vertices
from clusters and outer vertices do not matter, since purple
edges are inner chords.
Let $A\rq{}$ and $B\rq{}$ be the (super-) blocks containing the least
vertex of $h(A)$ and $h(B)$, respectively. The quadrangle containing $h(A)$
precedes the quadrangle containing $h(B)$ in the planar skeleton $H^{\times}[E_b]$.
In consequence, orderings
$A\rq{} \prec B\rq{} \prec A \prec B$ and $B\rq{} \prec A\rq{} \prec A \prec B$
are excluded. Edges $h(A)$ and $h(B)$ are disjoint if
$A\rq{} \prec A  \prec B\rq{} \prec B$. Then only $A\rq{} =B\rq{}$
or $A=B\rq{}$ remain,
since $A \neq B$ as  each (super-) block has at most one purple edge.
If $A\rq{}=B\rq{}$, then
$\lambda(B) \leq \lambda(A)$,  $h(A)$ and $h(B)$ nest,
since $x \leq y$ where $x$ and $y$ are are the least vertices
of $h(B)$ and $h(A)$, respectively.
Otherwise, $h(A)$ and $h(B)$ twist if $A=B\rq{}$, such that the purple
edges are embedded in two pages.
If there are three or more purple edges of blocks $A,B$ and $C$,
then at least two of them are disjoint or nest by the previous case analysis,
such that two pages suffice for the embedding of all purple edges.
\end{proof}

\iffalse

\deleted{
Moreover,
we can assume that blocks $B$ and $B\rq{}$ and the vertices of their
purple edges  are in the  same block-tree, since
edges  $d(B)$ and  $d(B')$ cannot twist if $B$ and $B\rq{}$ are in different
block-trees. By the edge coloring, two block-trees
are not connected by a red or purple edge, and they are separated
by an outer chord.
 Thus  $h(B)$ and  $h(B')$
  are   well-separated,  such that they do twist by
Lemma~\ref{lem:interfere}, or the vertices of $h(B)$ are in an interval $[H_2]$
and those of $h(B')$ are outside, where $H_2$ is the second part of a partition
by $B$ according to Lemma~\ref{lem:partition}.
}

\fi

\begin{figure}[t]
  \centering
  \subfigure[]{
    \includegraphics[scale=0.7]{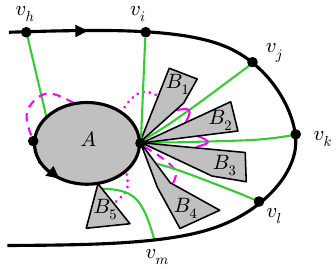} % Fig-7pagesN
    \label{fig:purple-graph}
    }
\subfigure[]{
    \includegraphics[scale=0.8]{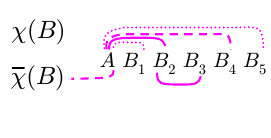} % Fig-7pagesN
    \label{fig:purple-pages}
    }
\caption{Illustration for the proof of Lemma~\ref{lem:x3} with (a) handles
(dashed), connectors (dotted), and bridges (solid)
and (b) their embedding in pages $\chi(B)$  and $\overline{\chi}(B)$,
respectively.
  }
  \label{fig:lemx3}
\end{figure}

From Lemmas~\ref{lem:planarpages}, \ref{lem:x1} and  \ref{lem:x3} we
obtain:

\begin{theorem} \label{thm:2-level}
Any 2-level subgraph of a normalized 1-planar graph admits a 6-page book
embedding, such that all
  black, green and red dotted edges are embedded in four pages
and the purple
ones in two pages. In addition,
  the edges of the outer cycle and the edges of any (super-) block are
embedded in a single page.
\end{theorem}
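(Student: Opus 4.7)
The plan is simply to collate the three preceding lemmas, observe that they partition the edge set according to the coloring scheme, and check that the per-page conflict arguments are pairwise independent when the same vertex ordering $L(H)$ is used throughout. So I would write the proof in one short paragraph after verifying that the bookkeeping of the color classes is exhaustive.

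First I would recall from the coloring scheme in Section~\ref{sect:edgecoloring} that every edge of a normalized 2-level 1-planar graph receives exactly one of the four colors black, green, red or purple: the uncrossed edges in the planar skeleton are black, except for pre-cluster binding edges which are recolored red; and the crossed edges fall into the explicit cases (2)–(5), each of which assigns green or red (and in case (5d) also purple) in a disjoint way. Hence the edge set decomposes as $E = E_b \cup E_g \cup E_r \cup E_p$. Next I would apply Lemma~\ref{lem:planarpages} with the ordering $L(H)$ to embed $E_b \cup E_g$ into three pages $\eta(B^+), \pi(B^+), \overline{\pi}(B^+)$, noting that the lemma's statement is already set up for this joint black/green embedding through the explicit page assignments in cases (2)–(4) of the coloring.

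Then I would add page $\rho(B)$ from Lemma~\ref{lem:x1} to host all of $E_r$, bringing the total to four pages for $E_b \cup E_g \cup E_r$, and finally append the two pages $\chi(B), \overline{\chi}(B)$ from Lemma~\ref{lem:x3} for $E_p$, obtaining the claimed 6-page embedding. Since each of these lemmas was proved using the same vertex ordering $L(H)$ and each page is assigned to a fixed color class, twists within a page can only occur between edges of the same color, and the three lemmas have already ruled those out. For the supplementary statements, I would just point out that the outer cycle $O$ is a single block in the previous level whose inner edges at that level coincide with the outer edges of $H$, so by the clause $\eta(B^+) = \pi(O)$ of Lemma~\ref{lem:planarpages} all outer edges of $H$ land in the single page $\pi(O)$; and the same lemma already places the edges of any super-block $B^+$ together in page $\pi(B^+)$.

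The only real thing to check, and thus the main obstacle, is that no edge is left uncolored or doubly colored; once that case analysis against Section~\ref{sect:edgecoloring} is complete, the theorem follows immediately from Lemmas~\ref{lem:planarpages}, \ref{lem:x1} and \ref{lem:x3} without any further argument.
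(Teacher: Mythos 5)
Your proposal is correct and matches the paper, which derives Theorem~\ref{thm:2-level} directly from Lemmas~\ref{lem:planarpages}, \ref{lem:x1} and \ref{lem:x3} with no further argument, exactly as you do. Your added check that the color classes are exhaustive and that the page sets for the three lemmas are disjoint (so no cross-color conflict can arise) is a reasonable bit of extra diligence but does not change the route.
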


\section{Embedding  1-Planar Graphs} \label{sect:main}
By the peeling technique, the 6-page book embedding  of  2-level 1-planar
multigraphs can be composed to a 12-page book embedding of  1-planar graphs.
  One page can be saved  for the
uncrossed edges as observed by Yannakakis \cite{y-epg4p-89} for his
5-page algorithm for planar graphs, since the page for the edges of a (super-) block
is  reused for the outer chords and backward binding edges at the next level. Similarly,
a page for the purple edges can be reused  at the next level.

\begin{lemma} \label{lem:dashedred-and-last-chords} %Lem3.9
If an  outer chord is incident to the last outer vertex, % of super-block $B$,
then, as an alternative to $\eta(B)$,  it can  be embedded in one of $\pi(B)$ or
$\overline{\pi}(B)$ (or in one of $\chi(B)$ or $\overline{\chi}(B)$),
 such that there is no conflict at all previous  or later levels.
\end{lemma}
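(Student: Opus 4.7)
The plan is to exploit the fact that $v_t$ is the last outer vertex so the partition of the planar skeleton induced by the outer chord $e = \edge{v_i}{v_t}$ (Lemma~\ref{lem:partition}(v)) is faithfully reflected in the vertex ordering. First, Corollary~\ref{cor:chord-no-left} gives that $v_t$ dominates no super-block in the second part $H_2$, so by Lemma~\ref{lem:ordering} the interval $[v_i, v_t]$ in $L(H)$ consists exactly of the vertices in $L \cup V(H_2)$, where $L$ is the set of vertices from super-blocks in $H_1$ dominated by $v_i$. The crucial structural consequence is that every super-block at the current level has its vertex interval lying either entirely inside $[v_i, v_t]$ (when its dominator lies in $\{v_i, v_{i+1}, \ldots, v_{t-1}\}$) or entirely outside $[v_i, v_t]$ (when its dominator lies in $\{v_0, \ldots, v_{i-1}\}$ or equals $v_t$, the latter super-blocks being placed strictly after $v_t$).

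At the current level, every edge embedded in $\pi(B)$, $\overline{\pi}(B)$, $\chi(B)$ or $\overline{\chi}(B)$ is an inner edge, a forward binding edge, or a purple edge. Each such edge has its two endpoints within a single super-block interval, or (for connectors, bridges and analogous purple edges) between two adjacent super-blocks in a single block-tree sharing an inner cutvertex. Since the cutvertex is an inner vertex it cannot coincide with $v_i$ or $v_t$, so both adjacent super-blocks lie on the same side of $[v_i, v_t]$. Hence every such edge either nests within $e$ or is disjoint from $e$, and no twist can arise when $e$ is added to any of the four pages at this level.

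For the page reuse across same-parity levels, I would argue separately for deeper and earlier levels. At a deeper same-parity level $\ell + 2k$ every edge has both endpoints contained in the recursive descendants of a single super-block at the current level; by the structural observation its containing super-block interval is entirely inside or outside $[v_i, v_t]$, so the edge cannot twist $e$. At an earlier same-parity level $\ell - 2k$ every edge has its endpoints at levels at most $\ell - 1$, and by the recursive vertex ordering these endpoints lie outside the interval occupied by the outer cycle $O$ and its descendants, hence outside $[v_0, v_t] \supseteq [v_i, v_t]$. Such an edge is either disjoint from $e$ (both endpoints on the same side) or has $e$ nested strictly inside it (endpoints on opposite sides); in neither case is there a twist.

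The main obstacle is justifying the ``entirely inside or entirely outside'' property at the boundary, in particular for super-blocks dominated by $v_i$ (where the $L$-blocks and the $H_2$-blocks coexist but both stay within $[v_i, v_{i+1}]$), for super-blocks dominated by $v_t$ (handled by Corollary~\ref{cor:chord-no-left}), and for clusters attached to inner separation pairs $\seppair{b}{v}$ whose outer vertex $v$ must lie on the same side of $e$ as the cluster because the uncrossed multi-edge $\edge{b}{v}$ cannot cross the uncrossed chord $e$. Once these boundary cases are settled, the freedom to place $e$ in any of $\pi(B)$, $\overline{\pi}(B)$, $\chi(B)$ or $\overline{\chi}(B)$ follows and the peeling-induced page reuse across same-parity levels stays conflict-free.
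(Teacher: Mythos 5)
Your central structural claim---that every edge placed in $\pi(B)$, $\overline{\pi}(B)$, $\chi(B)$ or $\overline{\chi}(B)$ has both endpoints inside a single super-block interval, or joins two super-blocks whose intervals lie on the same side of $[v_i,v_t]$---fails exactly at the boundary vertex $v_i$, and this is the case the lemma is really about. Let $B$ be the least super-block dominated by $v_i$ in part $H_1$. Its leader $b_0$ sits in the parent block $A$, which is in general dominated by some outer vertex $v_j<v_i$, so $b_0<v_i$ while $V(B)$ is placed just right of $v_i$. The first and last edges $\edge{b_0}{b_1}$ and $\edge{b_0}{b_q}$ of $B$ therefore satisfy $b_0<v_i<b_1,b_q<v_t$ and \emph{twist} the outer chord $\edge{v_i}{v_t}$. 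The cutvertex $b_0$ being an inner vertex does not force $A$ and $B$ onto the same side of the interval; their intervals straddle $v_i$. Consequently your conclusion that the chord can be added to \emph{any} of the four pages is false: it cannot go into $\pi(B)$, where precisely those first and last edges live. The paper's proof turns on this point: it fixes $B$ to be that least super-block dominated by $v_i$ (or the crossing green chord, or arbitrary when neither exists), observes that the only edges spanning $v_i$ but not $v_t$ are the first/last edges of $B$ and that these sit in $\pi(B)$, that no edge spans $v_t$ because $v_t$ is last and any block it dominates is covered and absorbed by a block-expansion, and that forward binding edges from $H_1$ end at $u\le v_i$ or $u=v_t$ --- so $v_i$ and $v_t$ are ``free'' in $\overline{\pi}(B)$ specifically, and the chord is rerouted there.

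A second omission: when several outer chords $\edge{v_i}{v_t}$ and $\edge{v_j}{v_t}$ are incident to the last outer vertex, the pages $\overline{\pi}(B_i)$ and $\overline{\pi}(B_j)$ selected for them must be made consistent (the chords nest), which the paper handles by synchronizing, i.e.\ swapping $\pi$ and $\overline{\pi}$ throughout the block-tree containing $B_j$ when needed. Your treatment of deeper and earlier same-parity levels is in the right spirit and matches the paper's reuse argument, but without the correct page selection at the current level the proof does not go through.
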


\begin{figure}[t]
  \centering
  \subfigure[]{
    \includegraphics[scale=0.6]{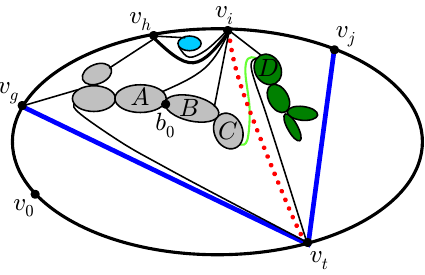}% % Fig-10pages.ppt
    \label{fig:lastcord-graph}
    }
 \subfigure[]{
    \includegraphics[scale=0.7]{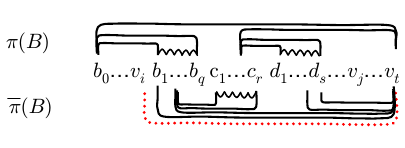}  %% Fig-10pages.ppt
    \label{fig:lastchord-pages}
    }

\caption{Illustration for the proof of Lemma~\ref{lem:dashedred-and-last-chords}
 (a) with outer chords
incident to the last outer vertex $v_t$ and (b) edges embedded in pages $\pi(B)$
and $\overline{\pi}(B)$, respectively.
  }
  \label{fig:lastchord}
\end{figure}

 \begin{proof}
An  outer chord $\edge{v_i}{v_t}$ in the interior of $O=v_0,\ldots, v_t$
is embedded in page $\eta(B)=\pi(O)$ by Lemma~\ref{lem:planarpages}.
It
partitions the planar subgraph $H[E_b]$
into parts $H_1$ and $H_2$ by Lemma~\ref{lem:partition} if it is uncrossed.
There is a similar partition induced by the root of the block-tree to the right
of  $\edge{v_i}{v_t}$ if it is crossed by a green inner chord $\edge{x}{y}$
between two block-trees, see Figure~\ref{fig:last-1}.
We claim that   $\edge{v_i}{v_t}$ can also be embedded in page $\overline{\pi}(B)$, where
$B$ is the least super-block dominated by $v_i$ in part $H_1$, $B=x,y$ if
$v_i$ does not dominate a super-block in $H_1$ but $\edge{x}{y}$ crosses $\edge{v_i}{v_t}$,
and $B$ is arbitrary, otherwise.
By Lemmas~\ref{lem:partition}  and \ref{lem:ordering}, we have
 $b_0 <v_i < L < V(H_2) < v_t$, where $b_0$ is the leader of $B$ in some block $A$,
block $B$ is the first super-block of $L$, and $V(H_2)$ is the set of vertices in part $H_2$
induced by the outer chord $\edge{v_i}{v_t}$.
Then the first and last edge of $B$ span $v_i$. These edges are embedded in page
$\pi(B)$. This also holds if $B=x,y$.
By the use of block-expansions, if $\edge{d}{u}$ is a forward binding edge with $d$ in part
$H_1$, then $u \leq v_i$ or $u=v_t$, since $\edge{v_i}{v_t}$ is an outer
chord and $v_t$ is last.
Clearly, there is no edge in $H$ that spans $v_t$, since $v_t$ is last.
If $v_t$  dominates a block, the it covers it, such that the
dominated  block is merged into another block by a block-expansion.

 Hence, vertices $v_i$ and $v_t$ are ``free'' in page $\overline{\pi}(B)$,
such that
edge  $\edge{v_i}{v_t}$ can be added to $\overline{\pi}(B)$ without   twisting
another edge embedded in the page. It does not create a conflict at the next level,
since page $\pi(B)$ is used for outer chords and backward binding edges at the next level
in the interior of $B$, and there is no conflict from previous levels,
since both $v_i$ and $v_t$ are  free.
If there are outer chords $\edge{v_i}{v_t}$ and $\edge{v_j}{v_t}$, then let
$\pi(B)=\pi(B\rq{})$
or $\pi(B)=\overline{\pi}(B\rq{})$, such that both pages $\pi(B)$ and $\overline{\pi}(B)$
are used for the re-embedding. Alternatively, one can
synchronize the pages. For outer chords $\edge{v_i}{v_t}$ and $\edge{v_j}{v_t}$,
let $B_i$ and $B_j$ be the least blocks dominated by $v_i$ and $v_j$, respectively.
Let $B_i < B_j$ and assume that $\pi(B_i) = \overline{\pi}(B_j)$. Then swap pages
$\pi(B)$ and $\overline{\pi}(B)$ for all blocks in the block-tree containing $B_j$.
This is doable, even if an  outer chord  incident to $v_t$ is  crossed by an edge $\edge{a}{b}$,
since $b$ is the first vertex of the  root. Then all outer chords incident to
the last outer vertex are re-embedded in page  $\overline{\pi}(B)$.

  The case for pages $\overline{\chi}(B)$ and $\chi(B)$ is similar.
\end{proof}

 We wish to embed a connector $\edge{a_{j+1}}{b_q}$ or a bridge
$\edge{a_1}{a_q}$ of block $B$ in another page, in a similar way as before for the
last outer edge. Both are incident to the last vertex of $B$ and close to its
last edge $\edge{b_0}{b_q}$, where $b_0=a_j$ for a parent block $A$,
see Figure~\ref{fig:purple}.
If a connector is embedded in page $\pi(B)$, then the first edge
$\edge{b_0}{b_1}$ must be embedded in another page.
Moreover,  there is a problem at the
next level, since all inner chords $\edge{b_0}{b_j}$
that are embedded
in page $\pi(B)$, too, see Lemma~\ref{lem:planarpages}.
As there are outer vertices between $b_0$ and $b_1$, there will be
a conflict with inner chords  $\edge{b_0}{b_j}$ in
 any of $\eta(O), \pi(O), \overline{\pi}(O)$ and $\rho(O)$.
Thus there are no means to save pages
$\chi(B)$ and $\overline{\chi}(B)$.

\begin{theorem}  \label{thm:10-pages}
 Every   1-planar graph  admits a  10-page book embedding
that can be computed in linear time from a  1-planar drawing.
\end{theorem}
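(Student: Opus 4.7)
The plan is to assemble the six-page embeddings of 2-level subgraphs from Theorem~\ref{thm:2-level} via the peeling technique, and then share two pages across consecutive levels. First I would apply Lemma~\ref{lem:normalform} to augment the input 1-planar drawing to a normalized 1-planar multigraph $G'$ in linear time; a book embedding of $G'$ restricts to one of $G$ after deleting the added copies. I then peel the planar skeleton $G'[E_b]$ into levels $0,1,2,\ldots$ and let $H_\ell$ be the 2-level subgraph on levels $\ell$ and $\ell+1$ together with the crossed edges between them. By the normalization every edge of $G'$ lies in some $H_\ell$, and each $H_\ell$ inherits a six-page embedding from Theorem~\ref{thm:2-level} on pages that I would name $\eta_\ell,\pi_\ell,\overline{\pi}_\ell,\rho_\ell,\chi_\ell,\overline{\chi}_\ell$.

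The local vertex orderings $L(H_\ell)$ concatenate into one global ordering because the vertices of a super-block at level $\ell$ form an interval that re-appears as the outer cycle of the corresponding piece of $H_{\ell+1}$. Using fresh page names at every level, the peeling cycles through only two classes (even and odd), giving the naive bound of $12$ pages. The first saving, in the style of Yannakakis's 5-page algorithm, is the identification $\eta_{\ell+1}:=\pi_\ell$. The inner edges of every super-block $B^+$ at level $\ell$ that $\pi_\ell$ holds are precisely the outer-cycle edges of $H_{\ell+1}$ on $B^+$, so the two placements coincide; the additional outer chords and backward binding edges of level $\ell+1$ that now join $\eta_{\ell+1}$ live inside the interval $[V(B^+)]$ by Lemma~\ref{lem:ordering} and Corollary~\ref{cor:chord-no-left}, so by Lemma~\ref{lem:planarpages} they are consistent inside that interval with $B^+$'s inner edges, while Lemma~\ref{lem:dashedred-and-last-chords} handles outer chords incident to the last outer vertex of $B^+$. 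An analogous identification of one of $\chi_\ell,\overline{\chi}_\ell$ with one of $\chi_{\ell+1},\overline{\chi}_{\ell+1}$, using Lemma~\ref{lem:x3} and the one-to-one correspondence between uncovered blocks and super-blocks, saves the second page: purple edges are inner chords confined to a super-block's interval, so they only interact with that super-block's interior and can be re-routed onto the shared page by the opposite-page assignment scheme of Lemma~\ref{lem:x3}.

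Altogether these two identifications remove exactly two pages from the naive $12$, yielding a $10$-page book embedding. Every step of the construction---the augmentation, the BFS-based peeling, the computation of super-blocks and block-trees, the coloring, the per-level page assignment, and the global page identifications---is local and can be implemented in linear time, so the overall algorithm is linear. I expect the main obstacle to be making the non-interference argument for the shared page $\pi_\ell=\eta_{\ell+1}$ watertight, since now an edge embedded at level $\ell$ and an edge embedded at level $\ell+1$ sit in the same physical page; the argument must lean on the strict right-boundary property of Corollary~\ref{cor:chord-no-left} to confine every level-$(\ell+1)$ edge on that page to the interval of its containing super-block from level $\ell$, and on Lemma~\ref{lem:dashedred-and-last-chords} to rescue those outer chords incident to a last outer vertex that cannot otherwise be placed on $\eta$ without creating a twist across levels.
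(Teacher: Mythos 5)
Your overall architecture matches the paper's: normalize via Lemma~\ref{lem:normalform}, peel into 2-level subgraphs, apply Theorem~\ref{thm:2-level} for six pages per parity class, and then save two pages at the composition, the first by the Yannakakis-style identification $\eta_{\ell+1}=\pi_\ell$. That first saving is handled essentially as in the paper. The problem is the second saving, which is where the actual content of the theorem lies, and there your argument has a genuine gap. You justify sharing a purple page across levels by asserting that ``purple edges are inner chords confined to a super-block's interval, so they only interact with that super-block's interior.'' This premise is false: a connector $\edge{a_{j+1}}{b_q}$ and a bridge $\edge{a_1}{b_q}$ each have one endpoint in $B^+$ and the other in a different super-block $A^+$ (the parent or the preceding sibling), so they are not confined to $[V(B^+)]$; even a handle $\edge{b_0}{b_{q-1}}$ has its endpoint $b_0$ in the parent block. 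Because the premise fails, the claimed ``analogous identification'' of a $\chi$-page across levels is not established by anything in Lemma~\ref{lem:x3}, which only governs conflicts among purple edges within one 2-level subgraph.

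What the paper actually does, and what your proposal is missing, is a case analysis on the type of the purple edge $h(B)$. If $h(B)$ is a connector or a bridge, it \emph{spans} the vertices of every block $X$ at the next level in the interior of $B$ (since those are placed between $b_1$ and $b_q$, or immediately left of $b_1$), so it nests with the purple edges of $X$ and the page can be reused. If $h(B)$ is a handle $\edge{b_0}{b_{q-1}}$, this spanning property fails for blocks dominated by the last vertices of $B$, so the handle cannot stay on the shared $\chi$-page; the paper instead relocates it to $\pi(B)$, which in turn forces the chords incident to the last vertex $b_q$ in the interior of $B$ to be evicted from $\pi(B)$ into $\overline{\pi}(B)$ via Lemma~\ref{lem:dashedred-and-last-chords} (a lemma you invoke only for the first saving, where it is less essential). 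Note also that the paper explicitly argues that connectors and bridges \emph{cannot} be moved to $\pi(B)$ without cascading conflicts at the next level, so the handle/non-handle split is forced, not a convenience. Without this case distinction and the handle-relocation step, the count stops at eleven pages rather than ten.
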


\begin{proof}
Augment the 1-planar drawing to a normalized 1-planar multigraph $G$
according to Lemma~\ref{lem:normalform}. The added edges are
  removed in the end.
 The leveling is computed iteratively from the
planar skeleton $G[E_b]$. The vertex ordering of $G$ is obtained from the vertex
ordering of  its planar 2-level subgraphs,
as described in Section~\ref{sect:vertex-ordering}. By Theorem~\ref{thm:2-level},
any normalized 1-planar 2-level graph has a 6-page book embedding.
We will save two pages at the composition of the embedding
of 1-planar 2-level subgraphs at any two consecutive or all odd and even levels
without changing the vertex ordering.
We have $\eta(B) = \pi(O)$ and use pages $\eta(B), \pi(B), \overline{\pi}(B),
\pi(X)$ and $\overline{\pi}(X)$ for nested blocks $O, B$ and $X$
at levels $\ell, \ell+1$ and $\ell+2$, respectively,
which is Yannakakis 5-page algorithm for planar graphs.
  In addition, we use pages $\rho(B)$ and
 $\rho(X)$ for the red  edges
and pages $\chi(B), \overline{\chi}(B)$ and $\chi(X)$ for the purple ones.

\begin{figure}[t]
  \centering
  \subfigure[]{
    \includegraphics[scale=0.7]{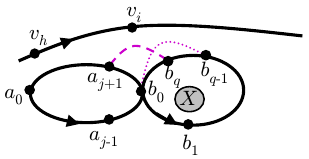}                 %NEU von book-7p
    \label{fig:all-dashed-graph}
    }
    \hspace{1mm}
    \subfigure[]{
        \includegraphics[scale=0.7]{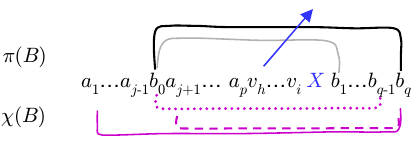}
        \label{fig:all-dashed-pages}
} \caption{Illustration for the proof of Theorem~\ref{thm:10-pages}.
(a) Purple edges in a graph,  a  connector $\edge{a_{j+1}}{b_q}$, drawn purple and
dashed and a handle, drawn purple and dotted. (b) A connector and a bridge,
 drawn purple and solid in a page embedding. Block $X$ is from the next level
and the blue arrow indicates edges from the previous level incident to
vertices $v_h,\ldots, v_i$.
  }
  \label{fig:purple}
\end{figure}

It remains to show that   page $ \chi(X)$ can be  saved.
Recall that there are only a few purple edges as each such edge is assigned to
an uncovered block  or  a super-block  and is a
   handle,  a  connector,  and a bridge, respectively,
see Figure~\ref{fig:badedges}.

If $h(B)$  is   a connector or a bridge, then it spans the vertices
from any block $X$ in the interior of $B$, since $X$ is placed
immediately to the left of $b_1$ if it is dominated by $b_0$ and
between $b_1$ and $b_q$, otherwise.
Hence,   page $\chi(B)$ can be reused for the edges
assigned to page $ \chi(X)$.

If $h(B)$ is a handle  $\edge{b_0}{b_{q-1}}$ of block $B=b_0,\ldots, b_q$,
then  we move it from $\chi(B)$ to $\pi(B)$ and move the chords incident to
the last vertex $b_q$ of $B$ in the interior of $B$
  from $\pi(B)$ to   $\overline{\pi}(B)$ (or  $\overline{\chi}(B)$),
as shown in  Lemma~\ref{lem:dashedred-and-last-chords}.
Handle  $\edge{b_0}{b_{q-1}}$ does not twist any edge
embedded on page $\pi(B)$. It nests with the last edge of $B$, and
possible conflicts at the next levels are avoided by moving chords.
Similar to the last edge of $B$, the handle does not twist
any edge embedded into page $\pi(B)$ at the previous level.
By induction on the blocks at level $\ell+1$,  all handles
are moved to pages $\pi(B)$ and $\overline{\pi}(X)$, respectively.
Hence, page
$\chi(B)$ can been saved, such that
all edges of $G$ are embedded in ten pages.

Concerning the running time, the computations on $G$ are performed
on a planarization of $G$, which is obtained from a
1-planar drawing  of $G$ in
linear time. The augmentation of $G$ to a planar-maximal graph,
the
 decomposition into 3-connected components at separation pairs, and
 the edge coloring can be computed in linear time.
The computation of the vertex ordering  takes linear time, both for
2-level graphs, and, by  induction, for the whole graph.  Finally,
every edge is embedded in a page in  constant time, which altogether
takes linear time, since there are at most $4n-8$ edges. Hence,
the algorithm runs in linear time.
\end{proof}

We now consider the special case of a 1-planar graph with a Hamiltonian cycle.
Recall that planar graphs  with a Hamiltionian cycle  can be  embedded
in two pages \cite{bk-btg-79}.

\begin{corollary} (Alam~et al.~\cite{abk-bt1pg-15}) \label{cor:Hamiltonian}
A 1-planar graph $G=(V,E)$ has a 4-page book embedding if the planar
skeleton has a Hamiltonian cycle.
\end{corollary}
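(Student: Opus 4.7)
The strategy is to use the Hamiltonian cycle $C$ of the planar skeleton as the spine ordering and to decompose the edges of $G$ into two planar subgraphs, each admitting a 2-page book embedding by the classical Bernhart--Kainen theorem~\cite{bk-btg-79}.

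First, I would apply Lemma~\ref{lem:normalform} to augment the given 1-planar drawing of $G$ to a normalized 1-planar multigraph. Since the augmentation only introduces uncrossed edges (either inside faces of the drawing or as multi-edges at separation pairs), the Hamiltonian cycle $C$ of the planar skeleton survives, and its cyclic order, broken at an arbitrary vertex, serves as the spine ordering. For every X-quadrangle $Q$ of the normalized drawing, designate one of its two crossing diagonals as \emph{first} and the other as \emph{second}. Let $G_1$ consist of the planar skeleton $G[E_b]$ together with all first diagonals, and let $G_2$ consist of all second diagonals. By 1-planarity and normalization, each first diagonal has its unique crossing with its partnered second diagonal; therefore erasing the second diagonals from the 1-planar drawing of $G$ yields a \emph{planar} drawing of $G_1$, and symmetrically $G_2 \cup C$ is planar. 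Both subgraphs contain $C$ as a Hamiltonian cycle.

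By the Bernhart--Kainen theorem~\cite{bk-btg-79}, $G_1$ then admits a 2-page book embedding on spine $C$ in which chords drawn inside $C$ go to page $P_1$ and chords drawn outside $C$ go to page $P_2$ (the edges of $C$ itself are absorbed into these pages without conflict). Applying the same theorem to $G_2 \cup C$, the second diagonals drawn inside $C$ fit on a page $P_3$ and those drawn outside $C$ on a page $P_4$. Since the two embeddings share the same spine ordering and have disjoint edge sets, their union is a 4-page book embedding of $G$.

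\textbf{Main obstacle.} The crux of the argument is verifying that $G_1$ and $G_2 \cup C$ are genuinely planar after deleting the opposite diagonals. This relies on the observation that, after normalization, every crossed edge of $G$ has exactly one crossing, with its partner diagonal inside the unique X-quadrangle containing that crossing. Consequently, splitting the crossed edges into first/second pairs resolves all crossings cleanly, and both halves inherit planar drawings from the 1-planar drawing of $G$. Once planarity is in hand, invoking Bernhart--Kainen and combining the two embeddings is routine.
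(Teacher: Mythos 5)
Your proposal is correct and follows essentially the same route as the paper: split each pair of crossing edges into two classes, observe that the skeleton together with either class is planar and still contains the Hamiltonian cycle, and apply the Bernhart--Kainen two-page embedding to each half on the common spine. The preliminary normalization via Lemma~\ref{lem:normalform} is harmless but unnecessary, since the arbitrary green/red split of each crossing pair already works on the given drawing.
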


\begin{proof}
Suppose the planar skeleton $G[E_b]$
has a Hamiltonian cycle
$\Gamma$. For any pair of crossed edges, arbitrarily color green one edge   and
red the other. If $E_g$ ($E_r$) is the set of green
(red) edges, then both subgraphs $G_1 = G[E_b \cup E_g]$ and $G_2 = [E_b \cup
E_r]$ are planar and contain the  Hamiltonian cycle $\Gamma$.
Using the linear ordering from  $\Gamma$,  embed each of $G_1$
and  $G_2$ in  two pages, respectively, such that $G$ has a 4-page book embedding.
\end{proof}

An $n$-vertex 1-planar graph is called \emph{optimal} if it has $4n-8$
edges, which is maximum for simple 1-planar graphs \cite{bsw-1og-84}.
Extended wheel graphs are important optimal 1-planar graphs
\cite{b-ro1plt-18,s-s1pg-86,s-rm1pg-10}.
An \emph{extended wheel graph} $XW_{2k}$ is a 1-planar graph with $n=2k+2$
vertices and $4n-8$ edges. It consists of a cycle $C  =
 v_1,\ldots v_{2k}$   and
two poles $p$ and $q$. The edges of $C$ are uncrossed. There is an
edge $\edge{p}{v_i}$ and $\edge{q}{v_i}$ for $i=1,\ldots,2k$, but no
edge $\edge{p}{q}$. In addition, there are edges $\edge{v_i}{
v_{i+2}}$ for $i=1,\ldots, 2k$ where $v_{2k+1}=v_1$ and
$v_{2k+2}=v_2$.
The planar skeleton has a Hamiltonian cycle $(p, v_2, v_1, q, v_3, v_4, \ldots, v_{2k}, p)$.

\begin{lemma} \label{lem:optimal}
The book thickness of optimal 1-planar graphs is at least four.
\end{lemma}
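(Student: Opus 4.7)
The plan is to derive this immediately from the general edge-count lower bound of Bernhart and Kainen~\cite{bk-btg-79}, already cited in the introduction: any $n$-vertex graph $G$ with $m$ edges satisfies
\[
\mathrm{bt}(G) \;\geq\; \left\lceil \frac{m-n}{n-3} \right\rceil.
\]
I would first substitute $m = 4n-8$, which holds by definition of an optimal 1-planar graph, and simplify:
\[
\frac{m-n}{n-3} \;=\; \frac{3n-8}{n-3} \;=\; 3 + \frac{1}{n-3}.
\]
For every $n \geq 4$ this quantity is strictly greater than $3$, so the ceiling is at least $4$, and the claimed lower bound follows.

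The only sanity check is that the family of optimal 1-planar graphs is nonempty for the relevant range of $n$; this is already recorded in the excerpt (optimal 1-planar graphs exist for $n = 8$ and all $n \geq 10$, e.g.\ the extended wheel graphs $XW_{2k}$). For the smallest case $n = 8$ the bound evaluates to $\lceil 16/5 \rceil = 4$, which shows the eight-vertex extended wheel $XW_6$ already attains book thickness at least four.

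I do not anticipate any real obstacle here; the content of the lemma is a direct consequence of the density of optimal 1-planar graphs combined with the classical Bernhart--Kainen counting bound. The only mildly subtle point is that the estimate $3 + 1/(n-3)$ makes the ceiling jump to $4$ for \emph{every} admissible $n$, so no case distinction between $n = 8$ and large $n$ is needed. Combined with Corollary~\ref{cor:Hamiltonian}, which furnishes the matching upper bound of four pages for any 1-planar graph whose planar skeleton is Hamiltonian (as is the case for $XW_{2k}$), this lemma pins the book thickness of extended wheel graphs down to exactly four.
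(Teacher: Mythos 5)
Your proof is correct and is essentially identical to the paper's: both apply the Bernhart--Kainen lower bound $\lceil (m-n)/(n-3) \rceil$ with $m = 4n-8$ and observe that the resulting quantity exceeds $3$. Your version just spells out the arithmetic $3 + 1/(n-3)$ a bit more explicitly.
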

\begin{proof}
The book thickness of a graph with $n$ vertices and $m$ edges is at
least $\lceil \frac{m-n}{n-3} \rceil$  \cite{bk-btg-79}.
Optimal 1-planar  $n$-vertex graphs have
$4n-8$ edges, such that $\frac{m-n}{n-3} > 3$.
\end{proof}

\begin{corollary} \label{cor:XWgraphs}
The optimal 1-planar graphs $XW_{2k}$ have  book thickness  four.
\end{corollary}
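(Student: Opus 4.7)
The plan is to combine the lower bound of Lemma~\ref{lem:optimal} with the upper bound of Corollary~\ref{cor:Hamiltonian}, both of which are already in hand. For the lower bound, I would observe that by construction $XW_{2k}$ has $n = 2k+2$ vertices and exactly $4n-8$ edges, so it is optimal 1-planar. Lemma~\ref{lem:optimal} therefore gives book thickness at least four immediately.

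For the matching upper bound, I would verify that the Hamiltonian cycle $(p, v_2, v_1, q, v_3, v_4, \ldots, v_{2k}, p)$ mentioned in the excerpt lies entirely in the planar skeleton $G[E_b]$. In the standard 1-planar drawing of $XW_{2k}$, the uncrossed edges are precisely the rim edges of $C = v_1, \ldots, v_{2k}$ and the spokes $\edge{p}{v_i}$ and $\edge{q}{v_i}$; the only crossed edges are the chords $\edge{v_i}{v_{i+2}}$. The proposed cycle uses only rim edges of $C$ (such as $\edge{v_2}{v_1}$ and $\edge{v_3}{v_4}, \ldots, \edge{v_{2k-1}}{v_{2k}}$) and spokes at $p$ and $q$ (namely $\edge{p}{v_2}, \edge{v_1}{q}, \edge{q}{v_3}$, and $\edge{v_{2k}}{p}$), and visits every vertex exactly once. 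Hence the planar skeleton is Hamiltonian, and Corollary~\ref{cor:Hamiltonian} produces a 4-page book embedding.

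Combining the two bounds yields book thickness exactly four. Because both ingredients are already established, I do not expect any real obstacle; the only verification step is checking that the stated cycle uses only uncrossed edges, which is a direct consequence of the definition of $XW_{2k}$.
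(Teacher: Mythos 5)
Your overall strategy is exactly the paper's: the corollary follows by combining Lemma~\ref{lem:optimal} (lower bound) with Corollary~\ref{cor:Hamiltonian} applied to the Hamiltonian cycle $(p, v_2, v_1, q, v_3, v_4, \ldots, v_{2k}, p)$ of the planar skeleton; the paper gives no further argument. However, the one step you single out as needing verification is justified incorrectly. You claim that in the standard drawing the uncrossed edges are ``precisely the rim edges and the spokes'' and that ``the only crossed edges are the chords $\edge{v_i}{v_{i+2}}$.'' That cannot be: crossings come in pairs, so every crossed chord has a crossing partner. In the standard drawing of $XW_{2k}$ the chords do not cross each other (consecutive chords interleave on $C$, so they must alternate between the inside and the outside of $C$, after which no two chords cross); each chord $\edge{v_i}{v_{i+2}}$ instead crosses exactly one spoke at $v_{i+1}$, namely $\edge{p}{v_{i+1}}$ if the chord is drawn inside and $\edge{q}{v_{i+1}}$ if outside. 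Hence $2k$ of the $4k$ spokes are crossed and are \emph{not} in the planar skeleton, contrary to your description.

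The conclusion survives, but the verification has to be done against the actual drawing: with the alternating placement of the chords, at each $v_i$ exactly one of $\edge{p}{v_i}$, $\edge{q}{v_i}$ is uncrossed, and the parities work out so that the four spokes used by the stated cycle, $\edge{p}{v_2}$, $\edge{q}{v_1}$, $\edge{q}{v_3}$ and $\edge{p}{v_{2k}}$, are among the uncrossed ones (the cycle deliberately switches between $p$ and $q$ at the start so as to use only uncrossed spokes). Once that is checked, Corollary~\ref{cor:Hamiltonian} gives the upper bound and Lemma~\ref{lem:optimal} the matching lower bound, as you say.
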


In consequence, the crossed cube $XW_6$, shown in
Figure~\ref{fig:crossed-cube}, has book thickness four.  Note that also the
  unique optimal 1-planar graph  with 11 vertices
\cite{bsw-1og-84, s-o1pgts-10} can be embedded in 4 pages, although
  its planar skeleton does not have a Hamiltonian cycle.

\section{Conclusion}  \label{sec:conclusion}

We extend Yannakakis   algorithm \cite{y-epg4p-89}  on the book
embedding of planar graphs to 1-planar graphs and show that they can be
embedded in ten pages, which improves all previous bounds.  There are
 small 1-planar graphs with book thickness four, such that
there remains a gap of six between the upper and lower bounds on
the book thickness of 1-planar graphs.
 We conjecture that optimal 1-planar graphs
have   book thickness four.

There are many classes of beyond planar graphs
 for which the book thickness has not yet
been investigated \cite{dlm-survey-beyond-19}, including subclasses of 1-planar graphs,
such as IC- and NIC-planar graphs
\cite{klm-bib-17}.
 It is known that outer 1-planar graphs have book
thickness   two \cite{abbghnr-o1p-16} and
clique-augmented 2-planar graphs have book thickness at most 17
\cite{b-bookmap-20}.

%  long proof

% \bibliographystyle{splncs04}
%  \bibliographystyle{plainurl}% the mandatory bibstyle
%\bibliography{book1planar20}

\begin{thebibliography}{10}
\providecommand{\url}[1]{\texttt{#1}}
\providecommand{\urlprefix}{URL }
\providecommand{\doi}[1]{https://doi.org/#1}

\bibitem{abk-sld3c-13}
Alam, M.J., Brandenburg, F.J., Kobourov, S.G.: Straight-line
drawings of
  3-connected 1-planar graphs. In: Wismath, S., Wolff, A. (eds.) {GD} 2013.
  {LNCS}, vol.~8242, pp. 83--94. Springer (2013).
  \doi{10.1007/978-3-319-03841-4\_8},
  \url{https://doi.org/10.1007/978-3-319-03841-4\_8}

\bibitem{abk-bt1pg-15}
Alam, M.J., Brandenburg, F.J., Kobourov, S.G.: On the book thickness
of
  1-planar graphs. CoRR  \textbf{abs/1510.05891} (2015),
  \url{http://arxiv.org/abs/1510.05891}

\bibitem{abbbg-deque-18}
Auer, C., Bachmaier, C., Brandenburg, F.J., Brunner, W.,
Glei{\ss}ner, A.: Data
  structures and their planar graph layouts. J. Graph Algorithms Appl.
  \textbf{22}(2),  207--237 (2018). \doi{10.7155/jgaa.00465},
  \url{https://doi.org/10.7155/jgaa.00465}

\bibitem{abbghnr-o1p-16}
Auer, C., Bachmaier, C., Brandenburg, F.J., Glei\ss{}ner, A.,
Hanauer, K.,
  Neuwirth, D., Reislhuber, J.: Outer 1-planar graphs. Algorithmica
  \textbf{74}(4),  1293--1320 (2016). \doi{10.1007/s00453-015-0002-1}

\bibitem{bbkr-book1p-17}
Bekos, M.A., Bruckdorfer, T., Kaufmann, M., Raftopoulou, C.N.: The
book
  thickness of 1-planar graphs is constant. Algorithmica  \textbf{79}(2),
  444--465 (2017). \doi{10.1007/s00453-016-0203-2},
  \url{https://doi.org/10.1007/s00453-016-0203-2}

\bibitem{kbkpru-4pages-20}
Bekos, M.A., Kaufmann, M., Klute, F., Pupyrev, S., Raftopoulou,
C.N., Ueckerdt,
  T.: Four pages are indeed necessary for planar graphs. J. Comput. Geom.
  \textbf{11}(1),  332--353 (2020),
  \url{https://journals.carleton.ca/jocg/index.php/jocg/article/view/504}

\bibitem{bdggmr-benpsf-20}
Bekos, M.A., Lozzo, G.D., Griesbach, S., Gronemann, M.,
Montecchiani, F.,
  Raftopoulou, C.N.: Book embeddings of nonplanar graphs with small faces in
  few pages. In: Cabello, S., Chen, D.Z. (eds.) Proc. SoCG 2020. LIPIcs,
  vol.~164, pp. 16:1--16:17. Schloss Dagstuhl - Leibniz-Zentrum f{\"{u}}r
  Informatik (2020). \doi{10.4230/LIPIcs.SoCG.2020.16},
  \url{https://doi.org/10.4230/LIPIcs.SoCG.2020.16}

\bibitem{bdggmr-framed-24}
Bekos, M.A., Lozzo, G.D., Griesbach, S.M., Gronemann, M.,
Montecchiani, F.,
  Raftopoulou, C.N.: Book embeddings of \emph{k}-framed graphs and \emph{k}-map
  graphs. Discret. Math.  \textbf{347}(1),  113690 (2024).
  \doi{10.1016/J.DISC.2023.113690}

\bibitem{bk-btg-79}
Bernhart, F., Kainen, P.C.: The book thickness of a graph. J.
Combin. Theory,
  Ser. B  \textbf{27}(3),  320--331 (1979). \doi{10.1016/0095-8956(79)90021-2},
  \url{http://dx.doi.org/10.1016/0095-8956(79)90021-2}

\bibitem{bsw-1og-84}
Bodendiek, R., Schumacher, H., Wagner, K.: {\"U}ber 1-optimale
{G}raphen.
  Mathematische Nachrichten  \textbf{117},  323--339 (1984).
  \doi{10.1002/mana.3211170125}

\bibitem{b-ro1plt-18}
Brandenburg, F.J.: Recognizing optimal 1-planar graphs in linear
time.
  Algorithmica  \textbf{80}(1),  1--28 (2018),
  \url{doi:10.1007/s00453-016-0226-8}

\bibitem{b-bookmap-20}
Brandenburg, F.J.: Book embeddings of k-map graphs. CoRR
  \textbf{abs/2012.06874} (2020), \url{https://arxiv.org/abs/2012.06874}

\bibitem{bdeklm-IC-16}
Brandenburg, F.J., Didimo, W., Evans, W.S., Kindermann, P., Liotta,
G.,
  Montecchianti, F.: Recognizing and drawing {IC}-planar graphs. Theor. Comput.
  Sci.  \textbf{636},  1--16 (2016). \doi{10.1016/j.tcs.2016.04.026}

\bibitem{cgp-mg-02}
Chen, Z., Grigni, M., Papadimitriou, C.H.: Map graphs. J. {ACM}
  \textbf{49}(2),  127--138 (2002). \doi{10.1145/506147.506148},
  \url{http://doi.acm.org/10.1145/506147.506148}

\bibitem{clrs-ia-01}
Cormen, T.H., Leiserson, C.E., Rivest, R.L., Stein, C.: Introduction
to
  Algorithms. MIT Press and McGraw-Hill, 2nd edn. (2001)

\bibitem{dlm-survey-beyond-19}
Didimo, W., Liotta, G., Montecchiani, F.: A survey on graph drawing
beyond
  planarity. {ACM} Comput. Surv.  \textbf{52}(1),  4:1--4:37 (2019).
  \doi{10.1145/3301281}, \url{https://doi.org/10.1145/3301281}

\bibitem{d-gt-00}
Diestel, R.: Graph Theory. Springer (2000)

\bibitem{d-glls-15}
Dujmovic, V.: Graph layouts via layered separators. J. Comb. Theory,
Ser. {B}
  \textbf{110},  79--89 (2015). \doi{10.1016/j.jctb.2014.07.005},
  \url{https://doi.org/10.1016/j.jctb.2014.07.005}

\bibitem{df-stackqueue-18}
Dujmovic, V., Frati, F.: Stack and queue layouts via layered
separators. J.
  Graph Algorithms Appl.  \textbf{22}(1),  89--99 (2018).
  \doi{10.7155/jgaa.00454}, \url{https://doi.org/10.7155/jgaa.00454}

\bibitem{djmmuw-queue-20}
Dujmovic, V., Joret, G., Micek, P., Morin, P., Ueckerdt, T., Wood,
D.R.: Planar
  graphs have bounded queue-number. J. {ACM}  \textbf{67}(4),  22:1--22:38
  (2020), \url{https://dl.acm.org/doi/10.1145/3385731}

\bibitem{dw-llg-04}
Dujmovic, V., Wood, D.R.: On linear layouts of graphs. Discrete
Mathematics
  {\&} Theoretical Computer Science  \textbf{6},  339--358 (2004),
  \url{http://dmtcs.episciences.org/317}

\bibitem{dw-gtgtp-07}
Dujmovic, V., Wood, D.R.: Graph treewidth and geometric thickness
parameters.
  Discrete {\&} Computational Geometry  \textbf{37}(4),  641--670 (2007)

\bibitem{ehklss-tm1pg-13b}
Eades, P., Hong, S.H., Katoh, N., Liotta, G., Schweitzer, P.,
Suzuki, Y.: A
  linear time algorithm for testing maximal 1-planarity of graphs with a
  rotation system. Theor. Comput. Sci.  \textbf{513},  65--76 (2013).
  \doi{10.1016/j.tcs.2013.09.029},
  \url{http://dx.doi.org/10.1016/j.tcs.2013.09.029}

\bibitem{el-racg1p-13}
Eades, P., Liotta, G.: Right angle crossing graphs and 1-planarity.
Discrete
  Applied Mathematics  \textbf{161}(7-8),  961--969 (2013).
  \doi{10.1016/j.dam.2012.11.019}

\bibitem{gt-tgt-87}
Gross, J.L., Tucker, T.W.: Topological Graph Theory. Wiley (1987)

\bibitem{hls-cqsmlg-92}
Heath, L.S., Leighton, F.T., Rosenberg, A.L.: Comparing queues and
stacks as
  mechanisms for laying out graphs. {SIAM} J. Discrete Math.  \textbf{5}(3),
  398--412 (1992). \doi{10.1137/0405031},
  \url{http://dx.doi.org/10.1137/0405031}

\bibitem{heath1984embedding}
Heath, L.S.: Embedding planar graphs in seven pages. In: 25th IEEE
Symposium on
  Foundations of Computer Science. pp. 74--89 (1984)

\bibitem{hr-loguq-92}
Heath, L.S., Rosenberg, A.L.: Laying out graphs using queues. {SIAM}
J. Comput.
   \textbf{21}(5),  927--958 (1992). \doi{10.1137/0221055},
  \url{https://doi.org/10.1137/0221055}

\bibitem{k-acp-68}
Knuth, D.E.: The Art of Computer Programming, Vol 1. Addison Wesley
(1968)

\bibitem{klm-bib-17}
Kobourov, S.G., Liotta, G., Montecchiani, F.: An annotated
bibliography on
  1-planarity. Computer Science Review  \textbf{25},  49--67 (2017).
  \doi{http://dx.doi.org/10.1016/j.cosrev.2017.06.002}

\bibitem{m-genuspage-94}
Malitz, S.M.: Genus $g$ graphs have pagenumber {O}$(\sqrt{g})$. J.
Algorithms
  \textbf{17}(1),  85--109 (1994). \doi{10.1006/jagm.1994.1028}

\bibitem{m-edgepage-94}
Malitz, S.M.: Graphs with {E} edges have pagenumber {O}$(\sqrt{E})$.
J.
  Algorithms  \textbf{17}(1),  71--84 (1994). \doi{10.1006/jagm.1994.1027}

\bibitem{pt-gdfce-97}
Pach, J., T{\'o}th, G.: Graphs drawn with a few crossings per edge.
  Combinatorica  \textbf{17},  427--439 (1997). \doi{10.1007/BF01215922}

\bibitem{ringel-65}
Ringel, G.: Ein {S}echsfarbenproblem auf der {K}ugel. Abh. aus dem
Math.
  Seminar der Univ. Hamburg  \textbf{29},  107--117 (1965).
  \doi{10.1007/bf02996313}

\bibitem{s-s1pg-86}
Schumacher, H.: Zur {S}truktur 1-planarer {G}raphen. Mathematische
Nachrichten
  \textbf{125},  291--300 (1986)

\bibitem{s-o1pgts-10}
Suzuki, Y.: Optimal 1-planar graphs which triangulate other
surfaces. Discrete
  Math.  \textbf{310}(1),  6--11 (2010). \doi{10.1016/j.disc.2009.07.016},
  \url{https://doi.org/10.1016/j.disc.2009.07.016}

\bibitem{s-rm1pg-10}
Suzuki, Y.: Re-embeddings of maximum 1-planar graphs. {SIAM} J.
Discr. Math.
  \textbf{24}(4),  1527--1540 (2010). \doi{10.1137/090746835}

\bibitem{t-rdg-88}
Thomassen, C.: Rectilinear drawings of graphs. J. Graph Theor.
\textbf{12}(3),
   335--341 (1988). \doi{10.1002/jgt.3190120306}

\bibitem{y-epg4p-89}
Yannakakis, M.: Embedding planar graphs in four pages. J. Comput.
System. Sci
  \textbf{31}(1),  36--67 (1989). \doi{10.1016/0022-0000(89)90032-9},
  \url{https://doi.org/10.1016/0022-0000(89)90032-9}

\bibitem{y-4pages-20}
Yannakakis, M.: Planar graphs that need four pages. J. Comb. Theory,
Ser. {B}
  \textbf{145},  241--263 (2020). \doi{10.1016/j.jctb.2020.05.008},
  \url{https://doi.org/10.1016/j.jctb.2020.05.008}

\end{thebibliography}

\end{document}